\title{On Arrangements of Orthogonal Circles\thanks{\lncsarxiv{The
      full version of this article is available at
      ArXiv~\cite{full-version-arxiv}. }{}M.K.\ was supported by
    DAAD; S.C.\ was supported by DFG grant WO$\,$758/11-1.}}
\author{Steven~Chaplick\inst{1}\lncsarxiv{\orcidID{0000-0002-1129-3289}}{} \and
  Henry~F\"orster\inst{2} \and Myroslav~Kryven\inst{1} \and
  Alexander~Wolff\inst{1}\lncsarxiv{\orcidID{0000-0001-5872-718X}}{}}
\institute{Universit\"at W\"urzburg, W\"urzburg, Germany \and
  Universit\"at T\"ubingen, T\"ubingen, Germany}
\authorrunning{S.~Chaplick et al.}
\newcommand{\lncsarxiv}[2]{#2} %
\let\doendproof\endproof
\renewcommand\endproof{~\hfill$\qed$\doendproof}
\newtheorem{observation}{Observation}
\newcommand{\D}{\displaystyle}
\newcommand{\ra}[1]{r_{#1}} %
\newcommand{\ce}[1]{C_{#1}} %
\newcommand{\arr}{\ensuremath{\mathcal{A}}\xspace} %
\newcommand{\subtend}[2]{\angle({#1}, {#2})} %
\begin{document}

\maketitle

\centerline{\textit{Dedicated to Honza Kratochv\'il on his 60th birthday.}}

\begin{abstract}
  In this paper, we study arrangements of \emph{orthogonal circles},
  that is, arrangements of circles where every pair of circles must
  either be disjoint or intersect at a right angle.  Using geometric
  arguments, we show that such arrangements have only a
  linear number of faces.  This implies that
  \emph{orthogonal circle intersection graphs} have only a linear
  number of edges.  When we restrict ourselves to orthogonal
  \emph{unit} circles, the resulting class of intersection graphs is a
  subclass of penny graphs (that is, contact graphs of unit circles).
  We show that, similarly to penny graphs, it is NP-hard to recognize
  orthogonal unit circle intersection graphs.
\end{abstract}

\section{Introduction}

For the purpose of this paper, an \emph{arrangement} is a (finite)
collection of curves such as lines or circles in the plane.
The study of arrangements has a long history; for example,
Gr\"un\-baum~\cite{gruenbaum} studied arrangements of lines in the
projective plane.  
Arrangements of circles and other closed
curves have also been studied extensively~\cite{agarwal,alon,scheucher,kang-ross-arrangements,pinchasi}.  
An arrangement is \emph{simple} if no point of the plane belongs to more
than two curves and every two curves intersect.  A \emph{face} of an
arrangement~\arr in the projective or Euclidean plane $P$ 
is a connected component of the subdivision
induced by the curves in~\arr, that is, a face is a
component~of~$P \setminus \bigcup \arr$.

For a given type of curves, people have investigated the maximum
number of faces that an arrangement of such curves can form.
In 1826, Steiner~\cite{steiner1826}
showed that a simple arrangement of straight lines can have at most
${\binom{n}{2}} + {\binom{n}{1}} + {\binom{n}{0}}$ faces while an
arrangement of circles can have at most
$2\left({\binom{n}{2}} + {\binom{n}{0}} \right)$ faces.

Alon et al.~\cite{alon} and Pinchasi~\cite{pinchasi} studied the
number of \emph{digonal} faces, that is, faces that are bounded by two
edges, for various kinds of arrangements of circles.  For example, any
arrangement of $n$ unit circles has $O(n^{4/3}\log{n})$ digonal faces~\cite{alon}
and at most $n+3$ digonal faces if every pair
of circles intersects~\cite{pinchasi}, whereas arrangements of circles
with arbitrary radii have at most $20n-2$ digonal faces if every pair
of circles intersects~\cite{alon}.

The same arrangements can, however, have quadratically many \emph{triangular}
faces, that is, faces that are bounded by three edges. 
A lower bound example with quadratically many triangular faces can be constructed from a simple arrangement $\arr$ of lines  by
projecting it on a sphere (disjoint from the plane containing \arr) and having each line become a great circle.
This is always possible
since the line arrangement is simple; for more details see~\cite[Section 5.1]{felsner-book}. 
In this process we obtain $2p_3$ triangular faces, where $p_3$ is the number
of triangular faces in the line arrangement.
The great circles on the sphere can then be transformed into a circle
arrangement in a different plane using the stereographic projection.
This gives rise to an arrangement of circles with $2p_3$ triangular
faces in this plane.  F\"uredi and Pal\'asti~\cite{fueredi} provided
simple line arrangements with $n^{2}/3 + O(n)$ triangular faces.  With
the argument above, this immediately yields a lower bound of
$2n^{2}/3 + O(n)$ on the number of triangular faces of arrangements of
circles.  Felsner and Scheucher~\cite{scheucher} showed that this
lower bound is tight by proving that an arrangement of
\emph{pseudocircles} (that is, closed curves that can intersect at
most twice and no point belongs to more than two curves) can have at
most $2n^{2}/3 + O(n)$ triangular faces.

One can also specialize circle arrangements by
fixing an angle (measured as the angle 
between the two tangents at either intersection point)
at which each pair of intersecting circles
intersect; 
this was recently discussed by Eppstein~\cite{Eppstein-blog2018}.
In this paper, we consider arrangements of circles with the
restriction that each pair of circles must intersect at a right angle.
An arrangement of circles in which each intersecting pair intersect at a right angle is called \emph{orthogonal}.
We make the following simple observation regarding orthogonal circles;
see Fig.~\ref{fig:digon}.
\begin{observation}
\label{obs:basic}
Let $\alpha$ and $\beta$ be two circles
with centers $\ce{\alpha}$, $\ce{\beta}$ and radii
$\ra{\alpha}$, $\ra{\beta}$, respectively. 
Then $\alpha$ and $\beta$ are orthogonal if and only if
$\ra{\alpha}^2 + \ra{\beta}^2 = |\ce{\alpha}\ce{\beta}|^2$.
\end{observation}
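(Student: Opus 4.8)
The plan is to reduce orthogonality of the two circles to the orthogonality of their radii at a common point, and then to apply the Pythagorean theorem and its converse. The key geometric fact I will use is that the tangent line to a circle at a point is perpendicular to the radius joining the center to that point. Consequently, if $\alpha$ and $\beta$ meet at a point $P$, then the angle between the two tangents at $P$ equals (up to taking a supplement) the angle $\subtend{\ce{\alpha}P}{\ce{\beta}P}$ between the two radii at $P$. Thus $\alpha$ and $\beta$ intersect at a right angle precisely when the radii $\ce{\alpha}P$ and $\ce{\beta}P$ are perpendicular, i.e.\ when the triangle $\ce{\alpha}P\ce{\beta}$ has a right angle at~$P$.

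For the forward direction, I assume $\alpha$ and $\beta$ are orthogonal and let $P$ be one of their intersection points. By the observation above the triangle $\ce{\alpha}P\ce{\beta}$ is right-angled at~$P$, with legs of lengths $|\ce{\alpha}P| = \ra{\alpha}$ and $|\ce{\beta}P| = \ra{\beta}$ and hypotenuse $|\ce{\alpha}\ce{\beta}|$. The Pythagorean theorem then gives $\ra{\alpha}^2 + \ra{\beta}^2 = |\ce{\alpha}\ce{\beta}|^2$ directly.

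For the converse I assume $\ra{\alpha}^2 + \ra{\beta}^2 = |\ce{\alpha}\ce{\beta}|^2$ and want to conclude orthogonality. Here the step I expect to need the most care is establishing that the two circles actually intersect before invoking the converse of Pythagoras; this is the main (albeit modest) obstacle, since the triangle argument only applies once a common point $P$ exists. I would verify the standard intersection criterion $|\ra{\alpha} - \ra{\beta}| \le |\ce{\alpha}\ce{\beta}| \le \ra{\alpha} + \ra{\beta}$ by noting that $\sqrt{\ra{\alpha}^2 + \ra{\beta}^2}$ lies between $\max\{\ra{\alpha}, \ra{\beta}\}$ and $\ra{\alpha} + \ra{\beta}$, the latter because $\ra{\alpha}^2 + \ra{\beta}^2 \le (\ra{\alpha} + \ra{\beta})^2$. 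Once a common point $P$ is fixed, the triangle $\ce{\alpha}P\ce{\beta}$ has side lengths $\ra{\alpha}$, $\ra{\beta}$, and $|\ce{\alpha}\ce{\beta}|$ satisfying $\ra{\alpha}^2 + \ra{\beta}^2 = |\ce{\alpha}\ce{\beta}|^2$, so by the converse of the Pythagorean theorem the angle at $P$ is a right angle. Hence the radii, and therefore the tangents, are perpendicular at $P$, which is exactly orthogonality of $\alpha$ and $\beta$.
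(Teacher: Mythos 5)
Your proof is correct and matches the paper's (implicit) argument: the paper justifies this observation only via Fig.~\ref{fig:digon}, namely that $\alpha$ and $\beta$ are orthogonal if and only if the triangle $\triangle\ce{\alpha}X\ce{\beta}$ at an intersection point~$X$ is right-angled, which is exactly your tangent--radius reduction followed by the Pythagorean theorem and its converse. Your extra care in checking that the circles actually intersect in the converse direction is a welcome addition that the paper glosses over.
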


\begin{figure}[tb]
  \centering
  \includegraphics{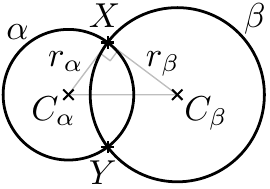}
  \caption{Circles~$\alpha$ and $\beta$ are orthogonal if and only if
    $\triangle\ce{\alpha}X\ce{\beta}$ is orthogonal.}
  \label{fig:digon}
\end{figure}
We discuss further basic properties of orthogonal circles in
Section~\ref{sec:preliminaries}.
In particular, in an arrangement of orthogonal circles no two circles
can touch and no three circles can intersect at the same point.

The main result of our paper is that arrangements of $n$ orthogonal
circles have at most $14n$ intersection points and at most $15n+2$ faces; see
Theorem~\ref{thm:orthogonalCircleArrangements} (in
Section~\ref{sec:complexity}).  This is different from arrangements
of orthogonal circular arcs, which can have quadratically many
quadrangular faces; see the arcs inside the blue square in
Fig.~\ref{fig:apollonian-just-parabolic}.
In Section~\ref{sec:counting-small-faces} we also consider small (that is, digonal and triangular) faces
and provide bounds on the number of such faces in arrangements
of orthogonal circles.

Given a set of geometric objects, their \emph{intersection graph} is a
graph whose vertices correspond to the objects and whose edges
correspond to the pairs of intersecting objects.  Restricting the
geometric objects to a certain shape restricts the class of graphs
that admit a representation with respect to this
shape.  For example,
graphs represented by disks in the Euclidean plane are called
\emph{disk intersection graphs}.
The special case of \emph{unit disk graphs}---intersection graphs of
unit disks---has been studied extensively.
Recognition of such graphs as well as many combinatorial problems
restricted to these graphs such as coloring, independent
set, and domination are all NP-hard~\cite{clarck1990};
see also the survey of Hlin\v{e}n\'{y} and
Kratochv\'{i}l~\cite{hk-rgdbs-DM01}.
Instead of restricting the radii of the disks, people have also
studied restrictions of the type of intersection.
If the disks are only allowed to touch, the corresponding graphs
are called \emph{coin graphs}.  Koebe's classical result says that the
coin graphs are exactly the planar graphs.
If all coins have the same size, the represented graphs are 
called \emph{penny graphs}.
These graphs have been studied extensively, too
\cite{Dumitrescu2011,CerioliFFP11,Eppstein2017TriangleFreePG}. 
For example, they are NP-hard to
recognize~\cite{kirkpatrick,gd-book}.

As with the arrangements above, we again consider a 
restriction on the intersection angle.
We define the \emph{orthogonal circle intersection graphs}
as the intersection graphs of arrangements of orthogonal circles.
In Section~\ref{sec:ocig}, we investigate properties of these graphs.
For example, similar to the proof of our linear bound on the number of intersection points for arrangements of orthogonal circles (Theorem~\ref{thm:orthogonalCircleArrangements}), 
we observe that such graphs have only a linear number of edges.

We also consider \emph{orthogonal unit circle intersection graphs},
that is, orthogonal circle intersection graphs with a representation
that consists only of unit circles.  We show that these graphs are a
proper subclass of penny graphs.  It is NP-hard to recognize penny
graphs~\cite{eades1996}.  We modify the NP-hardness proof of Di
Battista et al.~\cite[Section 11.2.3]{gd-book}, which uses the
\emph{logic engine}, to obtain the NP-hardness of recognizing
orthogonal unit circle intersection graphs
(Theorem~\ref{thm:recognition}).

\section{Preliminaries}
\label{sec:preliminaries}

We will use
the following type of M\"obius transformation~\cite{excursions}.
Let $\alpha$ be a circle having center at $\ce{\alpha}$ and radius 
$\ra{\alpha}$. The \emph{inversion} 
with respect to $\alpha$ is a mapping that maps 
any point $P \neq \ce{\alpha}$ to a point $P'$ on the ray $\ce{\alpha}P$ so that
$
|\ce{\alpha}P'|\cdot|\ce{\alpha}P| = \ra{\alpha}^2
$.
Inversion maps each 
circle not passing through $\ce{\alpha}$ to another circle 
and a circle passing through $\ce{\alpha}$ to a line;
see Fig.~\ref{fig:inversion}.
\begin{figure}[tb]
 \begin{subfigure}[t]{0.29\textwidth}
      \centering
      \includegraphics[page=1]{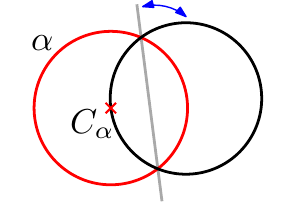}
      \caption{a circle passing through $\ce{\alpha}$ is mapped to a
        line (and vice versa)}
      \label{fig:inversion-1}
 \end{subfigure}
  \hfill
 \begin{subfigure}[t]{0.28\textwidth}
      \centering
      \includegraphics[page=3]{inversion}
      \caption{a circle not passing through $\ce{\alpha}$
      is mapped to another circle}
      \label{fig:inversion-2}
 \end{subfigure}
 \hfill
 \begin{subfigure}[t]{0.35\textwidth}
      \centering
      \includegraphics[page=2]{inversion}
      \caption{constructing the inversion~$P'$ of a point~$P$ w.r.t.\
        $\alpha$ via a circle~$\beta$ orthogonal to~$\alpha$}
      \label{fig:eqasy-construction}
 \end{subfigure}
 \caption{Examples of inversion}
 \label{fig:inversion}
\end{figure}
Inversion and orthogonal circles are closely related.  For example, in
order to construct the image~$P'$ of some point~$P$ that lies inside
the inversion circle~$\alpha$, consider the intersection points~$X$
and~$Y$ of~$\alpha$ and the line that is orthogonal to the line
through $\ce{\alpha}$ and~$P$ in~$P$; see
Fig.~\ref{fig:eqasy-construction}.  The point~$P'$ then is simply the
center of the circle~$\beta$ that is orthogonal to~$\alpha$ and goes
through~$X$ and~$Y$.  This follows from the similarity of the
orthogonal triangles $\triangle \ce{\alpha}XP'$ and $\triangle
\ce{\alpha}XP$.
A useful property of inversion, as of any other M\"obius transformation,
is that it preserves angles. 
Using inversion we can easily show several properties of orthogonal circles.
\begin{lemma}
  \label{lem:no4circles}
  No orthogonal circle intersection graph contains a $K_4$.  In other
  words, in an arrangement of orthogonal circles there cannot be four
  pairwise orthogonal circles.
\end{lemma}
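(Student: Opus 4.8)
The plan is to argue by contradiction, using an inversion that turns two of the circles into the coordinate axes. Suppose there were four pairwise orthogonal circles $\alpha$, $\beta$, $\gamma$, $\delta$. Since orthogonal circles in particular intersect, I would pick an intersection point~$P$ of $\alpha$ and $\beta$ and apply the inversion with respect to some circle centered at~$P$. Because $\alpha$ and $\beta$ both pass through the center of inversion, they are each mapped to a line; and since inversion preserves angles and $\alpha$ is orthogonal to $\beta$, these two lines are perpendicular, so I may choose coordinates in which they are the $x$-axis and the $y$-axis.

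Next I would argue that the remaining two circles are mapped to genuine circles rather than lines, i.e.\ that neither $\gamma$ nor $\delta$ passes through~$P$. Indeed, $P$ already lies on both $\alpha$ and $\beta$, so if $\gamma$ (say) also contained~$P$, then $\alpha$, $\beta$, $\gamma$ would all meet at~$P$, contradicting the basic property that no three orthogonal circles pass through a common point. Hence $\gamma$ and $\delta$ invert to circles $\gamma'$ and $\delta'$. Here I would use the elementary fact that a circle meets a line orthogonally exactly when the line passes through the circle's center. Since inversion preserves the orthogonality relations, $\gamma'$ is orthogonal to both coordinate axes, so its center lies on both axes, i.e.\ at the origin; the same holds for $\delta'$.

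Thus $\gamma'$ and $\delta'$ are concentric. But by Observation~\ref{obs:basic}, two concentric circles (distance~$0$ between centers) would have to satisfy $\ra{\gamma'}^2 + \ra{\delta'}^2 = 0$, which is impossible for positive radii, so $\gamma'$ and $\delta'$ cannot be orthogonal. As inversion preserves angles, this contradicts the assumed orthogonality of $\gamma$ and $\delta$, completing the proof.

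The main point to get right is the step ensuring that $\gamma$ and $\delta$ avoid the center of inversion~$P$, so that they truly become circles and the conclusion ``center at the origin'' applies to them; this is exactly where the no-three-circles-concurrent property is used. The two supporting geometric facts---that a circle and a line are orthogonal iff the line passes through the center, and that concentric circles are never orthogonal---are both immediate, the latter directly from Observation~\ref{obs:basic}.
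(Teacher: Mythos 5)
Your proof is correct and follows essentially the same route as the paper's: invert at an intersection point of $\alpha$ and $\beta$ so that they become perpendicular lines, observe that $\gamma$ and $\delta$ must then become circles centered at the lines' common point, and conclude that two concentric circles cannot be orthogonal. The only difference is that you explicitly justify that $\gamma$ and $\delta$ avoid the center of inversion (via the no-three-concurrent-circles property), a step the paper leaves implicit; this is a welcome bit of extra care, not a divergence in method.
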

\begin{proof}
  Assume that there are four pairwise orthogonal circles $\alpha$,
  $\beta$, $\gamma$, and $\delta$.  Let $X$ and $Y$ be the
  intersection points of $\alpha$ and~$\beta$.  Consider the inversion
  with respect to a circle~$\sigma$ centered at~$X$.  The images of
  $\alpha$ and $\beta$ are orthogonal lines $\alpha'$ and $\beta'$
  that intersect at~$Y'$, which is the image of~$Y$; see
  Fig.~\ref{fig:triple}.  The image of $\gamma$ is a circle $\gamma'$
  centered at~$Y'$ but so is the image~$\delta'$ of~$\delta$.
  Thus~$\gamma'$ and~$\delta'$ are either disjoint or equal, but not
  orthogonal to each other, a contradiction.
\end{proof}
\begin{figure}[tb]
\minipage[b]{0.67\textwidth}
 \begin{subfigure}[t]{0.47\textwidth}
      \centering
      \includegraphics[page=1]{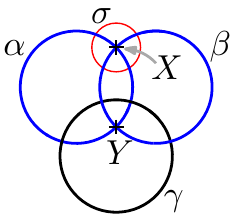}
      \caption{}
      \label{fig:triple-with-invcircle}
 \end{subfigure}
  \hfill
 \begin{subfigure}[t]{0.47\textwidth}
      \centering
      \includegraphics[page=2]{no_concave_side}
       \caption{}
      \label{fig:image-of-triple}
 \end{subfigure}
 \caption{(a) Three pairwise intersecting circles, the red inversion
   circle is centered at $X$; (b) image of the inversion.}
 \label{fig:triple}
\endminipage
\hfill
\minipage[b]{0.28\textwidth}
  \centering
  \includegraphics{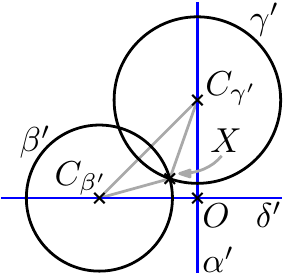}
  \caption{Illustration for the proof of Lemma~\ref{lem:no_induced_c4}}
  \label{fig:no_induced_c4}
\endminipage
\end{figure}

\begin{lemma}
  \label{lem:no_induced_c4}
  No orthogonal circle intersection graph contains an
  induced~$C_4$.  In other words, in an arrangement of orthogonal
  circles there cannot be two pairs of circles such that
  each circle of one pair is orthogonal to each circle of the other
  pair and the circles within the pairs are not orthogonal.
\end{lemma}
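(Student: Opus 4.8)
The plan is to mimic the inversion strategy that worked so well for Lemma~\ref{lem:no4circles}. Suppose, for contradiction, that the four circles $\alpha,\beta,\gamma,\delta$ realize an induced $C_4$ with edges $\alpha\beta$, $\beta\gamma$, $\gamma\delta$, $\delta\alpha$ all orthogonal, while the two diagonals $\alpha\gamma$ and $\beta\delta$ are \emph{not} orthogonal (they are either disjoint or intersect at some non-right angle). The goal is to normalize the picture by an inversion so that the orthogonality constraints become easy linear/right-angle conditions, and then show that the two non-adjacent pairs are forced to be orthogonal after all, contradicting the ``induced'' hypothesis.

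Concretely, I would first pick one of the intersection points of an orthogonal pair, say a crossing point~$X$ of $\alpha$ and $\beta$, and invert with respect to a circle~$\sigma$ centered at~$X$. Since inversion sends circles through the center to lines and preserves angles, $\alpha$ and $\beta$ become two orthogonal lines $\alpha'$ and $\beta'$ meeting at the image~$Y'$ of the second crossing point~$Y$; set up coordinates with $\alpha'$ and $\beta'$ as the $x$- and $y$-axes through~$Y'$. Now $\gamma'$ is a circle orthogonal to $\beta'$ (the $y$-axis) but \emph{not} to $\alpha'$, and $\delta'$ is a circle orthogonal to $\alpha'$ (the $x$-axis) but \emph{not} to $\beta'$; moreover $\gamma'\perp\delta'$ by the $C_4$ edge. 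Using the standard fact (Observation~\ref{obs:basic} and its line-analogue) that a circle is orthogonal to a line exactly when its center lies on that line, I get that the center of $\gamma'$ lies on the $y$-axis and the center of $\delta'$ lies on the $x$-axis. The orthogonality of $\gamma'$ and $\delta'$ then becomes the Pythagorean relation $\ra{\gamma'}^2+\ra{\delta'}^2=|\ce{\gamma'}\ce{\delta'}|^2$ between their centers and radii.

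The core computation is to translate these three conditions into algebraic constraints on the coordinates and radii and derive that $\alpha\gamma$ (equivalently $\alpha'\gamma'$, the axis and the circle $\gamma'$) must then meet orthogonally, or symmetrically that $\beta'\delta'$ must, contradicting the assumption that the diagonals are non-orthogonal. Since a circle centered on the $y$-axis always crosses the $x$-axis ($\alpha'$) at a right angle precisely when its center sits at the origin $Y'$, the heart of the argument is to pin down where the centers of $\gamma'$ and $\delta'$ can lie: the orthogonality $\gamma'\perp\delta'$ combined with ``center of $\gamma'$ on the $y$-axis, center of $\delta'$ on the $x$-axis'' forces a rigid configuration in which one of the forbidden diagonal orthogonalities is inevitable. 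I would verify this by writing $\ce{\gamma'}=(0,c)$, $\ce{\delta'}=(d,0)$, expanding the Pythagorean constraint, and reading off the contradiction; a symmetry argument then covers the remaining case. The figure \ref{fig:no_induced_c4} already sketches exactly this normalized configuration, so the algebra should be short.

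The main obstacle I anticipate is not the inversion step, which is routine and parallels Lemma~\ref{lem:no4circles}, but carefully handling the \emph{disjoint} branch of the diagonal hypothesis: ``not orthogonal'' packages together both ``intersecting at a non-right angle'' and ``not intersecting at all,'' and the algebra must rule out both simultaneously. I would phrase the final contradiction as an equality/inequality on the Pythagorean quantity $\ra{}^2+\ra{}^2-|\cdot|^2$ so that it cleanly forbids the diagonal pairs from being anything other than orthogonal, thereby excluding the induced-$C_4$ realization regardless of whether the diagonals happen to cross. The subtlety of getting the radii and center positions consistent across all three orthogonality constraints at once is where I expect to spend the most care.
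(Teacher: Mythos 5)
Your setup coincides with the paper's: invert at an intersection point of an adjacent (orthogonal) pair so that those two circles become perpendicular lines, and conclude that the centers of the remaining circles $\gamma'$ and $\delta'$ are forced onto those lines. Up to that point you are reproducing the paper's proof. The gap is in the final step. You claim that the conditions ``$\ce{\gamma'}$ on the $y$-axis, $\ce{\delta'}$ on the $x$-axis, and $\gamma'\perp\delta'$'' form a rigid configuration that forces one of the diagonal orthogonalities, i.e.\ forces $\ce{\gamma'}$ or $\ce{\delta'}$ to the origin. They do not: take $\ce{\gamma'}=(0,\tfrac12)$ with $\ra{\gamma'}=1$ and $\ce{\delta'}=(1,0)$ with $\ra{\delta'}=\tfrac12$. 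Then $\ra{\gamma'}^2+\ra{\delta'}^2=\tfrac54=|\ce{\gamma'}\ce{\delta'}|^2$, so all four ``edge'' orthogonalities hold, yet $\gamma'$ crosses the $x$-axis at a non-right angle and $\delta'$ misses the $y$-axis entirely---neither diagonal is orthogonal. So expanding the Pythagorean constraint alone leaves nothing to ``read off,'' and the contradiction you aim for (a diagonal being forced orthogonal) is unreachable.

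The ingredient you must promote from an afterthought to the heart of the argument is the \emph{disjointness} of the non-adjacent pairs. Inside an arrangement of orthogonal circles, ``not orthogonal'' means ``disjoint,'' since every intersecting pair is orthogonal by definition; the paper states the hypothesis in exactly this form, so the non-right-angle crossing branch you worry about never arises. Writing $\ce{\gamma'}=(0,c)$ and $\ce{\delta'}=(d,0)$, disjointness of $\gamma'$ from the line $\alpha'$ gives $|c|>\ra{\gamma'}$, and disjointness of $\delta'$ from $\beta'$ gives $|d|>\ra{\delta'}$; hence
$|\ce{\gamma'}\ce{\delta'}|^2=c^2+d^2>\ra{\gamma'}^2+\ra{\delta'}^2$,
which by Observation~\ref{obs:basic} contradicts the \emph{edge} $\gamma'\perp\delta'$ of the $C_4$---not the diagonal hypothesis, as you propose. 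This corrected computation is precisely the paper's argument in algebraic form: the paper derives the same strict inequality geometrically by noting that the nearer intersection point $X$ of the two circles lies inside the right triangle spanned by the two centers and the origin, whence $\angle\ce{\gamma'}X\ce{\delta'}>\pi/2$, contradicting orthogonality.
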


\begin{proof}
 Assume there are two pairs $(\alpha, \beta)$ and
 $(\gamma, \delta)$ of circles such that
 the circles within each pair do not intersect each other and 
 each circle of one pair intersects both circles of the other pair. 
 Consider an inversion
 via a circle $\sigma$ centered at one of the intersection points of the circles $\alpha$ and $\delta$.
 In the image they will become lines $\alpha'$ and $\delta'$.
 The image $\beta'$ of the circle $\beta$ must intersect $\delta'$ but
 not $\alpha'$, therefore, its center must lie on the line $\delta'$
 and it should be to one side of the line $\alpha'$; see Fig.~\ref{fig:no_induced_c4}. 
 Similarly the center of the image $\gamma'$ of the circle $\gamma$ 
 must lie on the line $\alpha'$
 and $\gamma'$ should be to one side of the line $\delta'$. 
 Shift the drawing so that the intersection of $\alpha'$ and $\delta'$
 is at the origin $O$ and observe that the triangle 
 $\triangle \ce{\beta'}O\ce{\gamma'}$
 is orthogonal, where $\ce{\beta'}$ and $\ce{\gamma'}$ are 
 the centers of the circles $\beta'$ and $\gamma'$.  Let
 $X$ be the intersection point of these circles that is closer to the
 origin.  This point~$X$ is contained in the triangle $\triangle
 \ce{\beta'}O\ce{\gamma'}$.  Therefore the triangle $\triangle
 \ce{\beta'}X\ce{\gamma'}$ cannot be orthogonal---a
 contradiction.
\end{proof}

A \emph{pencil} is a family of circles 
who share a certain characteristic.
In a \emph{parabolic} pencil all circles have one point in common, 
and thus are all tangent to each other; %
see Fig.~\ref{fig:apollonian-just-parabolic}.   
In an \emph{elliptic} pencil all circles go through two given points;
see the gray circles in Fig.~\ref{fig:apollonian}.  
In a \emph{hyperbolic} pencil all circles are orthogonal 
to a set of circles that go through two given points, that is, to some
elliptic pencil; see the black circles in Fig.~\ref{fig:apollonian}.  

For an elliptic pencil whose circles share two points~$A$~and~$B$ and the
corresponding hyperbolic pencil, 
the circles in the hyperbolic pencil possess several properties useful
for our purposes~\cite{excursions}.
Their centers are collinear and they consist of non-intersecting
circles that form two nested structures of circles,
one containing~$A$, the other one containing $B$ in its interior; see
Fig.~\ref{fig:apollonian}.

Two pencils of circles such that each circle in one pencil is orthogonal to each circle in the other are called
\emph{Apollonian circles}. There can be two such combinations of pencils, that is, one with two parabolic pencils and one with
an elliptic and a hyperbolic pencil. We focus on the latter since such Apollonian circles 
contain  arbitrarily large arrangements of orthogonal circles, that is, two orthogonal circles from
the elliptic pencil and arbitrary many circles from the hyperbolic pencil.
Equivalently, such Apollonian circles are
an inversion image of a family of concentric circles centered at some
point $X$ and concurrent lines passing through $X$; see
Fig.~\ref{fig:apollonian-inversion}.  We use this equivalence in the
next proof.

\begin{figure}[tb]
  \minipage[b]{0.31\textwidth}
      \centering
      \includegraphics{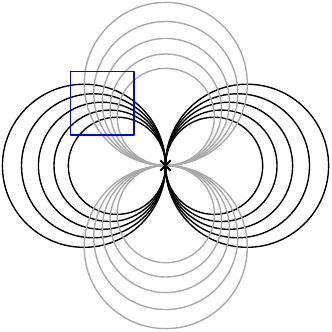}
      \caption{Apollonian circles consisting of two parabolic pencils
        of circles (one in black, the other in gray).}
      \label{fig:apollonian-just-parabolic}
  \endminipage
  \hfill
  \minipage[b]{0.65\textwidth}
    \begin{subfigure}{0.58\textwidth}
      \centering
      \includegraphics{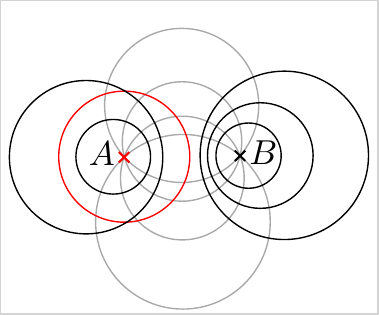}
      \caption{}	
      \label{fig:apollonian}
   \end{subfigure}
   \hfill
   \begin{subfigure}{0.38\textwidth}
      \centering
      \includegraphics{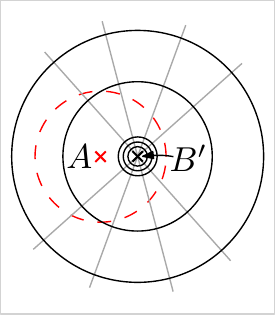}
      \caption{}
      \label{fig:apollonian-inversion}
   \end{subfigure} 
   \caption{(a) Apollonian circles consisting of an elliptic (in gray)
     and hyperbolic (in black) pencil of circles; (b) its inversion
     via a circle centered at $A$ (in red).
     }
  \endminipage
\end{figure}

\begin{lemma}
  \label{lem:apollonian-structure}
  Three circles such that one is orthogonal to the two others belong to the
  same family of Apollonian circles. Two sets of circles such that each circle in
  one set is orthogonal to each circle in the other set and each set has
  at least two circles belong to the same family of Apollonian circles.
  In particular the set belonging to the elliptic pencil can contain at most two 
  circles.
\end{lemma}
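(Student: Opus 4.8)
The plan is to reduce every configuration, by a suitable inversion, to the normal form of an Apollonian family recalled just before the lemma: the circles concentric about a point $X$ together with all the lines through $X$. Since inversions preserve orthogonality, carry circles-or-lines to circles-or-lines, and map pencils to pencils (hence Apollonian families to Apollonian families), it suffices to recognise each configuration inside this normal form after inverting. The engine of the whole argument is the reduction claim: \emph{a circle orthogonal to two distinct circles $\beta$ and $\gamma$ is orthogonal to every circle of the pencil $P$ generated by $\beta$ and $\gamma$.} (Equivalently, orthogonality is a symmetric bilinear relation on the coordinates of circles and a pencil is a line in that space, so orthogonality to two generators extends to the whole span; the conjugate $P^{\perp}$ is then again a pencil.) I would prove the reduction by normalising $\beta,\gamma$: if they are disjoint, invert them to two concentric circles about some $X$; if they intersect, invert at one of their intersection points so that they become two lines through a common point $W$. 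In the first case a genuine circle cannot be orthogonal to two distinct concentric circles, since equating the two orthogonality relations of Observation~\ref{obs:basic} forces the two radii to coincide; hence the orthogonal object is a line, and a line orthogonal to a circle passes through its centre, so it is a line through $X$. In the second case a circle orthogonal to two lines must be centred at their meeting point $W$. Either way the orthogonal circle is exactly the complementary type in the normal form.

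For the first statement, let $\alpha$ be orthogonal to $\beta$ and $\gamma$, and let $P$ be the pencil generated by $\beta$ and $\gamma$. By the reduction, $\alpha$ is orthogonal to all of $P$, that is, $\alpha$ lies in the conjugate pencil $P^{\perp}$. Thus $\beta,\gamma\in P$ and $\alpha\in P^{\perp}$, so all three circles lie in the single Apollonian family $P\cup P^{\perp}$. Concretely, inverting $\beta,\gamma$ to the normal form turns $\alpha$ into a line through the common centre (when $\beta,\gamma$ are disjoint) or into a circle about the common point (when they intersect), which is precisely a sub-configuration of concentric circles plus concurrent lines.

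For the second statement, pick two distinct circles $\beta,\gamma$ from the first set $S_1$ and let $P$ be the pencil they generate. Every circle of the second set $S_2$ is orthogonal to both $\beta$ and $\gamma$, hence to all of $P$, so $S_2\subseteq P^{\perp}$. Since $P^{\perp}$ is a pencil and $S_2$ contains two distinct circles, those two generate $P^{\perp}$; now every circle of $S_1$ is orthogonal to both of them, hence to all of $P^{\perp}$, so $S_1\subseteq (P^{\perp})^{\perp}=P$. Therefore $S_1$ and $S_2$ occupy the complementary pencils $P$ and $P^{\perp}$ of one Apollonian family.

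Finally, in the non-parabolic case one of $P,P^{\perp}$ is elliptic and the other hyperbolic; within an arrangement of orthogonal circles the parabolic case is excluded because its circles are mutually tangent. Say $S_1$ lies in the elliptic pencil, so all its circles pass through the two base points $A,B$ and hence any two of them intersect (at $A$ and $B$). In an arrangement of orthogonal circles two intersecting circles must be orthogonal, so the circles of $S_1$ are pairwise orthogonal. Placing $A,B$ on a coordinate axis, the circles through $A$ and $B$ are centred on the perpendicular bisector, and Observation~\ref{obs:basic} shows two of them are orthogonal only for a single reciprocal relation between their signed centre coordinates, a relation that three such circles cannot satisfy pairwise; hence at most two circles of the elliptic pencil can be pairwise orthogonal, which is the claimed bound. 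The main obstacle is establishing the reduction claim cleanly and, above all, the bookkeeping around it: correctly matching each set to the elliptic versus the hyperbolic pencil, handling the degenerate line and parabolic cases, and justifying that the orthogonal object becomes the complementary type in each normal form. Once that is settled, all three statements follow directly.
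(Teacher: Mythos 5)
Your proof is correct, and while it shares the paper's central device (invert two of the circles to a normal form --- concentric circles or lines through a point --- and read off that the remaining circles must be of the complementary type), it is organized around a genuinely different key lemma. You first establish that orthogonality to two circles propagates to the entire pencil they generate and that the conjugate of a pencil is again a pencil, and then derive both statements by pure ``span'' bookkeeping, including $S_1\subseteq(P^{\perp})^{\perp}=P$. The paper instead argues each statement by hand: for the first it splits on whether the two circles orthogonal to the third intersect or not and inverts accordingly (citing the classical fact that two disjoint circles can be inverted to concentric ones); for the second it first invokes Lemma~\ref{lem:no_induced_c4} to find an \emph{orthogonal} pair inside one of the sets, inverts at one of their intersection points, and observes that the images of $S_2$ are concentric circles and the images of $S_1$ concurrent lines. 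Your route dispenses with Lemma~\ref{lem:no_induced_c4} entirely, since any two distinct circles of $S_1$ generate a pencil regardless of whether they are orthogonal, which is a small structural gain; it also covers the ``intersecting but non-orthogonal generators'' case that the paper never needs because it works inside an arrangement of orthogonal circles. Your coordinate computation for the final claim (circles through $A$ and $B$ centred at $(x,0)$ are orthogonal iff $x_1x_2=-a^2$, which three distinct values cannot satisfy pairwise) makes explicit a step the paper leaves implicit, namely why the elliptic side carries at most two circles. The only caveat is that, like the paper, your argument for that last claim and the exclusion of the parabolic/tangent case silently uses that the circles live in an arrangement of orthogonal circles (so intersecting implies orthogonal and tangency is forbidden); you flag this bookkeeping yourself, and it is harmless since the lemma is only ever applied in that setting.
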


\begin{proof}
 Consider three circles such that one is orthogonal to two others.
 If all three are pairwise orthogonal, then their inversion via a circle centered at one
 of their intersection points (see Fig.~\ref{fig:triple-with-invcircle}) 
 is two perpendicular lines and 
 a circle centered at their intersection point (see Fig.~\ref{fig:image-of-triple}),
 therefore, they belong to the same family
 of Apollonian circles.
 If two circles do not intersect, then
 by \cite[Theorem~13]{excursions}, 
 it is always possible to invert them into two concentric circles. Since
 inversion preserves angles, the image of the third circle must be orthogonal 
 to both concentric circles and therefore it must be a straight line passing through the center of
 both circles. Therefore, the three circles belong to the same family
 of Apollonian circles.
 
 Consider now two sets $S_1$ and $S_2$ of circles such that each circle in
 one set is orthogonal to each circle in the other set and each set has
 at least two circles. By Lemma~\ref{lem:no_induced_c4}
 there must be two circles  $\alpha$ and $\beta$ in
 one of the sets, say $S_1$, that are orthogonal. 
 Consider an inversion via a circle $\sigma$ centered
 at one of the intersection points $X$
 of the circles $\alpha$ and $\beta$.
 In the image they will become orthogonal 
 lines $\alpha'$ and $\beta'$ intersecting at a point $Y$.
 Because inversion preserves angles,
 the image of each circle in $S_2$ is a circle centered at $Y$.
 Since $S_2$ contains at least two circles, the image of each circle in 
 $S_1$ must be orthogonal to two circles centered at $Y$, therefore,
 it must be a straight line passing through $Y$.
 Thus, the circles in $S_1$ and $S_2$ belong to the same family
 of Apollonian circles and $S_1$ contains at most two circles.
\end{proof}

Because each triangular or quadrangular face consists of 
either three circles such that one is orthogonal to two others or
two pairs of circles such that each circle in 
one pair is orthogonal to each circle in the other pair, 
we obtain the following observation from Lemma~\ref{lem:apollonian-structure}.
\begin{observation}
  \label{obs:face-shape}
  In any arrangement of orthogonal circles, each triangular 
  and each quadrangular face 
  is formed by Apollonian circles.
\end{observation}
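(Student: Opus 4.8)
The plan is to treat this as a direct corollary of Lemma~\ref{lem:apollonian-structure}: for a triangular or quadrangular face I would identify the circles that carry its bounding arcs, read off the orthogonality relations forced by the incidences at the vertices, check that these relations match one of the two hypotheses of Lemma~\ref{lem:apollonian-structure}, and then invoke that lemma. Throughout I would rely on the basic facts recorded after Observation~\ref{obs:basic}, namely that in an arrangement of orthogonal circles no two circles touch, no three circles pass through a common point, and any two circles cross in at most two points; hence every vertex of a face is a transversal crossing, at a right angle, of exactly two distinct circles, and (by definition of orthogonal arrangement) any two circles that cross are orthogonal.

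For a triangular face I would first argue that its three edges lie on three \emph{distinct} circles. If two of the three arcs lay on the same circle~$\alpha$, then going around the boundary these two arcs of~$\alpha$ would have to meet at a vertex, i.e.\ a self-crossing of~$\alpha$, which is impossible. Thus the boundary involves circles $\alpha,\beta,\gamma$, and its three vertices are the crossings $\alpha\cap\beta$, $\beta\cap\gamma$, and $\gamma\cap\alpha$; since crossing implies orthogonality, $\alpha,\beta,\gamma$ are pairwise orthogonal. In particular one of them is orthogonal to the other two, so the first part of Lemma~\ref{lem:apollonian-structure} applies and the three circles belong to a single family of Apollonian circles.

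For a quadrangular face I would distinguish two cases according to how many distinct circles carry its four arcs (two circles cannot suffice, as that would require four distinct crossings of a single pair). If the four arcs lie on four distinct circles $\alpha,\beta,\gamma,\delta$ in cyclic order, the four vertices yield $\alpha\perp\beta$, $\beta\perp\gamma$, $\gamma\perp\delta$, $\delta\perp\alpha$; taking the opposite pairs $S_1=\{\alpha,\gamma\}$ and $S_2=\{\beta,\delta\}$, every circle of~$S_1$ is orthogonal to every circle of~$S_2$ and each set has two circles, so the second part of Lemma~\ref{lem:apollonian-structure} applies. If instead only three circles are involved, then necessarily one circle, say~$\alpha$, carries two opposite arcs while two further circles carry the remaining two; here the vertices only force $\alpha$ to be orthogonal to both of the others, so the first part of Lemma~\ref{lem:apollonian-structure} applies. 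Either way the circles bounding the face lie in one family of Apollonian circles.

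The invocation of Lemma~\ref{lem:apollonian-structure} is the routine part; the only place requiring care --- and the main obstacle --- is the combinatorial bookkeeping of the two preceding paragraphs, i.e.\ ruling out degenerate boundary walks (a single circle contributing adjacent arcs, or a vertex that is not a simple crossing) and reading the orthogonality incidences correctly. These are all controlled by the no-tangency, no-triple-point, and at-most-two-crossings properties recalled above, together with Lemma~\ref{lem:no_induced_c4}, which guarantees that the four-circle quadrangle is not an induced $C_4$ and is therefore consistent with the hypotheses of Lemma~\ref{lem:apollonian-structure}.
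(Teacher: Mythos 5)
Your proposal is correct and follows exactly the paper's route: the paper justifies this observation in a single sentence by noting that a triangular face is bounded by three circles with one orthogonal to the other two, and a quadrangular face by two pairs with cross-orthogonality, and then invoking Lemma~\ref{lem:apollonian-structure}. Your version merely spells out the combinatorial bookkeeping (distinctness of the bounding circles, the degenerate three-circle quadrangle) that the paper leaves implicit, and is if anything slightly more complete.
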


\section{Arrangements of Orthogonal Circles}
\label{sec:complexity}

In this section we study the number of faces of an arrangement 
of orthogonal circles. In Section~\ref{sec:counting-faces},
we give a bound on the total number of faces. In 
Section~\ref{sec:counting-small-faces},
we separately bound the number of faces formed by two and three edges.

Let \arr be an arrangement of orthogonal circles in the plane.  By a
slight abuse of notation, we will say that a circle~$\alpha$
\emph{contains} a geometric object~$o$ and mean that the disk bounded
by~$\alpha$ contains~$o$.  We say that a circle $\alpha \in \arr$ is
\emph{nested} in a circle~$\beta \in \arr$ if $\alpha$ is contained
in~$\beta$.  We say that a circle $\alpha \in \arr$ is nested
\emph{consecutively} in a circle $\beta \in \arr$ if $\alpha$ is
nested in~$\beta$ and there is no other
circle~$\gamma\in\arr$ such that $\alpha$ is nested
in~$\gamma$ and~$\gamma$ is nested in $\beta$.
Consider a subset $S \subseteq \arr$ of maximum cardinality such that
for each pair of circles one is nested in the other.
The innermost circle~$\alpha$ in $S$ 
is called a \emph{deepest} circle in \arr; see
Fig.~\ref{fig:maximum-nested}.

\begin{lemma}
  \label{lem:not-too-many-bigger-circles}
  Let $\alpha$ be a circle of radius~$\ra{\alpha}$, and let $S$ be a set
  of circles orthogonal to~$\alpha$.  If $S$ does not contain nested
  circles and each circle in~$S$ has radius at least~$\ra{\alpha}$, then
  $|S| \le 6$.  Moreover, if $|S| = 6$, then all circles in~$S$ have
  radius~$\ra{\alpha}$ and $\alpha$ is contained in the union of the
  circles in~$S$.
\end{lemma}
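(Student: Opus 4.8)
The plan is to reduce the whole statement to an elementary fact about vectors in the plane. For each circle $\beta \in S$ I would consider the vector $p_\beta = (\ce{\beta} - \ce{\alpha})/\ra{\alpha}$, pointing from the center of $\alpha$ to the center of $\beta$ and normalized by $\ra{\alpha}$. Since $\beta$ is orthogonal to $\alpha$, Observation~\ref{obs:basic} gives $|\ce{\alpha}\ce{\beta}|^2 = \ra{\alpha}^2 + \ra{\beta}^2$, hence $|p_\beta|^2 = 1 + (\ra{\beta}/\ra{\alpha})^2 \ge 2$ because $\ra{\beta} \ge \ra{\alpha}$; so every $p_\beta$ has length at least $\sqrt{2}$.

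The crucial step, and the one I expect to require the most care, is to show that $p_\beta \cdot p_\gamma \le 1$ for every pair of distinct circles $\beta,\gamma \in S$. Because the circles of $S$ belong to an arrangement of orthogonal circles, any two of them are disjoint or orthogonal, and the no-nesting hypothesis forces a disjoint pair to be \emph{exterior} to one another. Expanding $|\ce{\beta}\ce{\gamma}|^2$ and substituting the two orthogonality relations with $\alpha$ yields the identity $(\ce{\beta}-\ce{\alpha})\cdot(\ce{\gamma}-\ce{\alpha}) = \ra{\alpha}^2 + \tfrac12\bigl(\ra{\beta}^2 + \ra{\gamma}^2 - |\ce{\beta}\ce{\gamma}|^2\bigr)$. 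I would then check the two admissible cases: if $\beta$ and $\gamma$ are orthogonal then $|\ce{\beta}\ce{\gamma}|^2 = \ra{\beta}^2 + \ra{\gamma}^2$, the bracket vanishes, and the value is exactly $\ra{\alpha}^2$; if they are exterior and disjoint then $|\ce{\beta}\ce{\gamma}|^2 \ge (\ra{\beta}+\ra{\gamma})^2$, so the bracket is at most $-\ra{\beta}\ra{\gamma} < 0$ and the value is strictly below $\ra{\alpha}^2$. In both cases $p_\beta \cdot p_\gamma \le 1$. The delicate point is to invoke both hypotheses correctly: the arrangement property to exclude non-right-angle crossings, and the no-nesting assumption to exclude the disjoint-but-nested configuration, which the same computation shows would instead give a dot product exceeding $\ra{\alpha}^2$ and break everything.

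Next I would convert this into an angular bound and count. For distinct $\beta,\gamma$ the angle $\psi$ between $p_\beta$ and $p_\gamma$ satisfies $\cos\psi = (p_\beta \cdot p_\gamma)/(|p_\beta|\,|p_\gamma|) \le 1/2$, since the numerator is at most $1$ and the denominator is at least $2$; hence $\psi \ge 60^\circ$. Thus the directions of the vectors $p_\beta$ are pairwise at least $60^\circ$ apart on the circle of directions. Ordering them cyclically, the angular gaps are positive, each at least $60^\circ$, and sum to $360^\circ$, so there are at most six of them and $|S| \le 6$.

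For equality, if $|S| = 6$ then six gaps of at least $60^\circ$ summing to $360^\circ$ must each equal exactly $60^\circ$. For two cyclically consecutive vectors this means $\psi = 60^\circ$, so $\cos\psi = 1/2 \le 1/(|p_\beta|\,|p_\gamma|)$, whence $|p_\beta|\,|p_\gamma| \le 2$; combined with $|p_\beta|,|p_\gamma| \ge \sqrt{2}$ this forces $|p_\beta| = |p_\gamma| = \sqrt{2}$, i.e.\ $\ra{\beta} = \ra{\gamma} = \ra{\alpha}$. As every vector is consecutive to some other, all circles of $S$ then have radius $\ra{\alpha}$. Finally, a circle of radius $\ra{\alpha}$ orthogonal to $\alpha$ cuts out on $\alpha$ an arc of angular half-width $45^\circ$ centered in the direction of $p_\beta$; since the six directions are $60^\circ$ apart, consecutive arcs overlap and their union is all of $\alpha$, which is exactly the statement that $\alpha$ lies in the union of the circles of $S$. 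Once $p_\beta \cdot p_\gamma \le 1$ is secured, this planar counting and equality analysis are routine.
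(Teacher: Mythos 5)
Your proof is correct and follows essentially the same route as the paper: both arguments reduce to showing that the angle $\angle\ce{\beta}\ce{\alpha}\ce{\gamma}$ is at least $\pi/3$ for any two circles of $S$ and then counting around $\ce{\alpha}$, with the same equality analysis. Your dot-product computation is just a more explicit version of the paper's observation that $\ce{\beta}\ce{\gamma}$ is the longest side of $\triangle\ce{\beta}\ce{\alpha}\ce{\gamma}$ (and in fact spells out the case distinction between orthogonal and exterior-disjoint pairs that the paper leaves implicit).
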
 
\begin{proof}
  Let $\ce{\alpha}$ be the center of $\alpha$.  Consider any two
  circles $\beta$ and~$\gamma$ in~$S$ with centers $\ce{\beta}$ and
  $\ce{\gamma}$ and with radii~$\ra{\beta}$ and~$\ra{\gamma}$,
  respectively.  Since $\ra{\beta} \ge \ra{\alpha}$ and
  $\ra{\gamma} \ge \ra{\alpha}$, the edge $\ce{\beta}\ce{\gamma}$
  is the longest edge of the triangle
  $\triangle\ce{\beta}\ce{\alpha}\ce{\gamma}$; see
  Fig.~\ref{fig:not-too-many-bigger-circles}.  So the angle
  $\angle\ce{\beta}\ce{\alpha}\ce{\gamma}$ is at least $\pi/3$.
  Thus, $|S| \le 6$.

  Moreover, if $|S|=6$ then, for each pair of circles $\beta$
  and~$\gamma$ in~$S$ that are consecutive in the circular ordering
  of the circle centers around~$\ce{\alpha}$, it holds that
  $\angle\ce{\beta}\ce{\alpha}\ce{\gamma} = \pi/3$.  This is only
  possible if $\ra{\beta}=\ra{\gamma}=\ra{\alpha}$.  Thus, all the
  circles in~$S$ have radius~$\ra{\alpha}$ and %
  $\alpha$ is contained in the union of the circles in~$S$; see
  Fig.~\ref{fig:contradiction-1}.
\end{proof}

\begin{figure}[tb]
  \minipage{0.48\textwidth}
    \centering
    \includegraphics{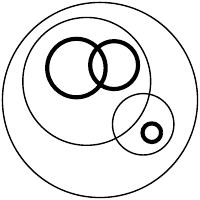}
    \caption{Deepest circles in bold}
    \label{fig:maximum-nested}
  \endminipage
  \hfill
  \minipage{0.48\textwidth}
    \centering
      \includegraphics{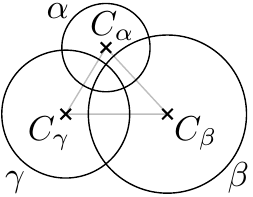}
      \caption{%
        $\angle\ce{\beta}\ce{\alpha}\ce{\gamma} \ge \pi/3$}
       \label{fig:not-too-many-bigger-circles}
  \endminipage
\end{figure}

\subsection{Bounding the Number of Faces}
\label{sec:counting-faces}

\begin{restatable}{theorem}{orthogonalCircleArrangementFaces}
  \label{thm:orthogonalCircleArrangements}
  Every arrangement of $n$ orthogonal circles has at most $14n$ intersection points
  and $15n+2$ faces. 
\end{restatable}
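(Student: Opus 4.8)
The plan is to bound first the number of intersection points and then to read off the bound on faces from Euler's formula. Since no two orthogonal circles touch and no three pass through a common point (Section~\ref{sec:preliminaries}), every orthogonal pair meets in exactly two points and every intersection point lies on exactly two circles. Hence, writing $G$ for the orthogonal circle intersection graph of \arr, the number of intersection points equals $2|E(G)|$, and it suffices to prove $|E(G)| \le 7n$.

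To bound $|E(G)|$ I would show that $G$ is $7$-degenerate. Because every subcollection of \arr is again an arrangement of orthogonal circles, whose intersection graph is the corresponding induced subgraph of $G$, it is enough to exhibit in every such arrangement one circle that is orthogonal to at most $7$ others; peeling off such a circle repeatedly then gives $|E(G)| \le 7n$ and thus at most $14n$ intersection points. The candidate is a deepest circle $\alpha$ (see Fig.~\ref{fig:maximum-nested}), and by the definition of deepest no circle of \arr is nested in $\alpha$, so every neighbor of $\alpha$ crosses it. I split the neighbors of $\alpha$ into those of radius at least $\ra{\alpha}$ and those of radius smaller than $\ra{\alpha}$.

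For the larger neighbors I would argue that they are pairwise non-nested: a nested pair of circles orthogonal to $\alpha$ lies together with $\alpha$ in one family of Apollonian circles by Lemma~\ref{lem:apollonian-structure}, and I would use the resulting hyperbolic-pencil structure, together with the fact that $\alpha$ is innermost of a maximum nesting chain, to rule this out for a deepest circle. The larger neighbors are then pairwise non-nested and all orthogonal to $\alpha$, so Lemma~\ref{lem:not-too-many-bigger-circles} bounds their number by~$6$. For the smaller neighbors the choice of $\alpha$ is essential: since any pair of smaller neighbors that crossed each other would, with $\alpha$, create an induced~$C_4$ forbidden by Lemma~\ref{lem:no_induced_c4}, and since a smaller neighbor occupies the deeply nested position around $\alpha$, I would show that at most one further edge can arise this way, for a total of at most~$7$ edges at $\alpha$.

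The main obstacle is precisely this degree estimate at the deepest circle. Lemma~\ref{lem:not-too-many-bigger-circles} only controls larger \emph{pairwise non-nested} neighbors, so the crux is to show that being innermost of a maximum nesting chain eliminates nesting among the larger neighbors and leaves room for essentially no additional smaller neighbor; the Apollonian/hyperbolic-pencil structure of Lemma~\ref{lem:apollonian-structure} and the forbidden induced~$C_4$ of Lemma~\ref{lem:no_induced_c4} are the tools I expect to need here. Once $|E(G)| \le 7n$ is established, I conclude for the faces by viewing \arr as a plane graph in which every vertex has degree~$4$, so its edges satisfy $E = 2V$, where $V \le 14n$ is the number of intersection points; Euler's formula $V - E + F = 1 + C$, with at most $n$ connected components, then yields $F = 1 + C + V \le 1 + n + 14n \le 15n + 2$, as claimed.
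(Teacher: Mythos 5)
Your global strategy coincides with the paper's: exhibit a circle orthogonal to at most seven others, conclude $14n$ intersection points by peeling, and read off the face bound from Euler's formula applied to the $4$-regular planarization (your computation $f'=1+c'+n'\le 15n+1\le 15n+2$ is exactly the paper's). The gap sits entirely in the degree bound, which is the substance of the theorem (the paper's Lemma~\ref{lem:max-deg-circle}), and your sketch of it does not go through. First, the paper does not take an arbitrary deepest circle but a deepest circle of \emph{minimum radius}; that choice is what guarantees that the neighbors of $\alpha$ that are not orthogonal to the enclosing circle $\beta$ all have radius at least $\ra{\alpha}$. Second, both of your bucket bounds are unsupported. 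For the ``larger'' bucket you assert pairwise non-nestedness, but a neighbor that is orthogonal to both $\alpha$ and the circle $\beta$ in which $\alpha$ is consecutively nested is, by the Apollonian structure, a large circle reaching from the interior of $\alpha$ to the exterior of $\beta$, and nothing prevents its disk from containing a small neighbor of $\alpha$; being deepest only forbids circles nested \emph{inside} $\alpha$, not nesting among $\alpha$'s neighbors. (Indeed, if all eight potential neighbors in the critical configuration were pairwise non-nested and of radius at least $\ra{\alpha}$, Lemma~\ref{lem:not-too-many-bigger-circles} would already give six, and the paper's case analysis would be superfluous.) For the ``smaller'' bucket your justification is incorrect as stated: two neighbors of $\alpha$ that cross each other form a $K_3$ with $\alpha$, not an induced $C_4$, so Lemma~\ref{lem:no_induced_c4} says nothing about them.

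The paper's actual partition is different and is what makes the argument work: it splits the neighbors of $\alpha$ into the set $S_\alpha$ of those not orthogonal to $\beta$ (at most six by Lemma~\ref{lem:not-too-many-bigger-circles}, using the minimum-radius choice) and those orthogonal to both $\alpha$ and $\beta$ (at most two by Lemma~\ref{lem:apollonian-structure}). This gives eight, and the real work --- absent from your proposal --- is to exclude $6+2$: if $|S_\alpha|=6$, then all six circles have radius exactly $\ra{\alpha}$ and their union covers $\alpha$, which forces each of the two circles $\gamma_1,\gamma_2$ orthogonal to both $\alpha$ and $\beta$ to cross two distinct circles of $S_\alpha$, and a case analysis on the four indices involved always produces a $K_4$ or an induced $C_4$, contradicting Lemma~\ref{lem:no4circles} or Lemma~\ref{lem:no_induced_c4}. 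Without this step your argument only yields eight neighbors, i.e.\ $16n$ intersection points, not $14n$.
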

The above theorem (whose formal proof is at the end of the section)
follows from the fact that any arrangement of orthogonal circles contains
a circle~$\alpha$ with at most seven \emph{neighbors} (that is, circles that
are orthogonal to~$\alpha$).
\begin{lemma}
\label{lem:max-deg-circle}
Every arrangement of orthogonal circles 
has a circle that is orthogonal to at most seven other circles.
\end{lemma}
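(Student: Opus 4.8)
The plan is to find the required circle among the \emph{deepest} circles and to bound its neighbors by classifying them according to radius. Concretely, I would let $\alpha$ be a deepest circle and write $N$ for its set of \emph{neighbors} (the circles orthogonal to~$\alpha$). By the very definition of a deepest circle, no circle of the arrangement is nested inside~$\alpha$: otherwise the maximum nested chain witnessing that $\alpha$ is deepest could be extended one step further inward, contradicting its maximality. This ``nothing is nested inside~$\alpha$'' property, combined with the fact that orthogonal circles must cross (so a neighbor can neither be nested in~$\alpha$ nor contain~$\alpha$), is the structural input I would exploit throughout.

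I would then split $N$ into the set $N_{\ge}$ of neighbors of radius at least $\ra{\alpha}$ and the set $N_{<}$ of neighbors of radius smaller than $\ra{\alpha}$. For $N_{\ge}$ most of the work is already packaged in Lemma~\ref{lem:not-too-many-bigger-circles}: provided $N_{\ge}$ contains no two nested circles, it has at most six elements. So the first key step is to show that \emph{no neighbor of~$\alpha$ is nested in another neighbor}. Here I would argue that a nested pair $\gamma\subset\beta$ of neighbors, together with the chain of circles containing~$\alpha$ (which exists because $\alpha$ sits in a maximum nested chain), can be recombined into a nested chain strictly longer than the maximum one, contradicting the choice of~$\alpha$; inverting at an intersection point of $\alpha$ and~$\beta$ to turn $\alpha,\beta$ into two orthogonal lines is a convenient way to expose the resulting nesting. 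This gives $|N_{\ge}|\le 6$, and, by the ``moreover'' part of Lemma~\ref{lem:not-too-many-bigger-circles}, the extremal value six forces all six large neighbors to have radius exactly $\ra{\alpha}$ and to cover~$\alpha$.

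The remaining, and in my view hardest, step is to control $N_{<}$ and to combine it with $N_{\ge}$ without exceeding the bound. A small neighbor lies mostly outside~$\alpha$, and because $\alpha$ is deepest a small neighbor cannot sink entirely into a larger neighbor: being nested inside a neighbor would again produce a nested chain strictly longer than the maximum one through~$\alpha$. In particular, once the six large neighbors already cover~$\alpha$ there is no room for a small neighbor, while in general the small neighbors can only occupy the portion of~$\alpha$ left uncovered by the large ones. Quantifying this trade-off---showing that the large neighbors ``use up'' enough of the circular order of centers around~$\ce{\alpha}$ that $|N_{\ge}|+|N_{<}|\le 7$---is the main obstacle, and it is where the \emph{maximality} of the chain defining~$\alpha$ (not merely ``nothing nested inside~$\alpha$'') must be used with care; if needed I would refine the choice of~$\alpha$ to a deepest circle of minimum radius, which makes genuinely smaller neighbors exceptional and keeps the final count at seven.
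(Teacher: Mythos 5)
Your opening moves match the paper's: choose a deepest circle $\alpha$ (of minimum radius among the deepest circles) and use Lemma~\ref{lem:not-too-many-bigger-circles} to cap the neighbors of radius at least $\ra{\alpha}$ at six. But the step you yourself flag as ``the main obstacle''---bounding the remaining neighbors and getting the total down to seven---is exactly where the proof lives, and the covering/angular trade-off you sketch is not how it is closed. The paper does not partition the neighbors by radius but by whether they are also orthogonal to $\beta$, the circle in which $\alpha$ is \emph{consecutively} nested. A neighbor not orthogonal to $\beta$ must be nested in $\beta$, hence is itself a deepest circle and so has radius at least $\ra{\alpha}$ by the choice of $\alpha$; these form the set $S_\alpha$ handled by Lemma~\ref{lem:not-too-many-bigger-circles}. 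A neighbor orthogonal to $\beta$ as well is controlled by a completely different tool: by Lemma~\ref{lem:apollonian-structure}, the circles orthogonal to both $\alpha$ and $\beta$ belong to one family of Apollonian circles and there are at most \emph{two} of them. Your proposal never invokes this pencil structure, so the neighbors that cross $\beta$ (which need not be small, so they do not even coincide with your $N_{<}$) are left genuinely unbounded.

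Even with both pieces in place you only reach $6+2=8$; the descent to seven is a further case analysis that your sketch does not touch. When two circles $\gamma_1,\gamma_2$ are orthogonal to both $\alpha$ and $\beta$ and $|S_\alpha|=6$, the ``moreover'' part of Lemma~\ref{lem:not-too-many-bigger-circles} forces the six circles of $S_\alpha$ to have radius $\ra{\alpha}$ and to cover $\alpha$; the intersection point of $\gamma_1$ and $\gamma_2$ lying inside $\alpha$ then forces each $\gamma_i$ to meet two distinct circles of $S_\alpha$, and every admissible choice of the four indices involved produces either a $K_4$ or an induced $C_4$ in the intersection graph, contradicting Lemma~\ref{lem:no4circles} or Lemma~\ref{lem:no_induced_c4}. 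A smaller point: your claim that no neighbor of a deepest $\alpha$ is nested in another neighbor is only justified for neighbors nested in $\beta$, where the inner circle of such a pair would have larger depth than $\alpha$; for neighbors that cross $\beta$ the chain-extension argument does not apply, which is another reason the radius-based split cannot carry the proof as stated.
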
 
\begin{proof}
If no circle is nested within any other, Lemma~\ref{lem:not-too-many-bigger-circles} implies that the smallest circle has at most six neighbors, and we are done. 

So, among the deepest circles in~\arr, consider a circle~$\alpha$
with the smallest radius.  Let~$r_\alpha$ be the radius of~$\alpha$.
Note that~$\alpha$ is nested in at least one circle. 
Let $\beta$ be a circle
such that $\alpha$ and $\beta$ are consecutively nested.
Denote the set of all circles in \arr that are orthogonal to $\alpha$
but not to~$\beta$ by~$S_{\alpha}$. 
All circles in $S_{\alpha}$ are nested in $\beta$.
Since $\alpha$ is a deepest circle, $S_{\alpha}$ contains no nested circles;
see Fig.~\ref{fig:max-deg-circle}.
Since the radius of every circle in $S_{\alpha}$ is at least~$r_\alpha$,
Lemma~\ref{lem:not-too-many-bigger-circles} ensures that $S_{\alpha}$
contains at most six circles.
Given the structure of Apollonian circles (Lemma~\ref{lem:apollonian-structure}),
there can be at most two
circles that intersect both~$\alpha$ and~$\beta$.
This together with Lemma~\ref{lem:not-too-many-bigger-circles} immediately implies
that $\alpha$ cannot be orthogonal to more than eight circles.
In the following we show that there can be at most seven such circles.

If there is only one circle intersecting both $\alpha$ and $\beta$,
then $\alpha$ is orthogonal to at most seven circles in total, and we
are done.

Otherwise, there are two circles orthogonal to both~$\alpha$
and~$\beta$.  Let these circles be~$\gamma_1$ and~$\gamma_2$.  We
assume that~$S_{\alpha}$ contains exactly six circles.  Hence, by
Lemma~\ref{lem:not-too-many-bigger-circles}, all circles
in~$S_{\alpha}$ have radius~$r_\alpha$.  Let $S_{\alpha} =
(\delta_0, \ldots, \delta_5)$ be ordered clockwise around $\alpha$
so that every two circles~$\delta_i$ and~$\delta_j$ with $i \equiv
j+1 \bmod 6$ are orthogonal.  

Let~$X$ and~$Y$ be the intersection points of~$\gamma_1$
and~$\gamma_2$; see Fig.~\ref{fig:max-deg-circle}.  Note that, by the
structure of Apollonian circles, one of the intersection points, say
$X$, must be contained inside~$\alpha$, whereas the other
intersection point~$Y$ must lie in the exterior of~$\beta$.  Since
the circles in~$S_{\alpha}$ are contained in~$\beta$, none of them
contains~$Y$.  Further, no circle~$\delta_i$ in~$S_{\alpha}$
contains~$X$, as otherwise the circles $\delta_i$, $\alpha$,
$\gamma_1$, and $\gamma_2$ would be pairwise orthogonal,
contradicting Lemma~\ref{lem:no4circles}.  Recall that, by
Lemma~\ref{lem:not-too-many-bigger-circles}, $\alpha$ is contained in
the union of the circles in~$S_{\alpha}$.  Since $X$ is not contained
in this union, $\gamma_1$ intersects two different circles
$\delta_i$ and $\delta_j$, and $\gamma_2$ intersects two different
circles~$\delta_k$ and~$\delta_l$.  
Note that $\gamma_1$ and $\gamma_2$ cannot
intersect the same circle~$\varepsilon$ in~$S_{\alpha}$,
because~$\varepsilon$, $\alpha$, $\gamma_1$, and $\gamma_2$ would
be pairwise orthogonal, contradicting
Lemma~\ref{lem:no4circles}.  Therefore, the indices $i$, $j$,
$k$, and~$l$ are pairwise different.

We now consider possible values of the indices~$i$, $j$, $k$, and~$l$,
and show that in each case we get a contradiction to
Lemma~\ref{lem:no4circles} or Lemma~\ref{lem:no_induced_c4}.  If $j
\equiv i+1 \bmod 6$, then $\gamma_1$, $\alpha$, $\delta_i$, and
$\delta_j$ would be pairwise orthogonal, contradicting
Lemma~\ref{lem:no4circles}; see Fig.~\ref{fig:contradiction-1}.
If $j \equiv i+2 \bmod 6$, then $\gamma_1$, $\delta_i$,
$\delta_{i+1}$, and~$\delta_j$ would form an induced~$C_4$ in the
intersection graph; see Fig.~\ref{fig:contradiction-2}.  This would
contradict Lemma~\ref{lem:no_induced_c4}.  If $j \equiv i+3
\bmod 6$ and $k \equiv l+3 \bmod 6$, then either $k \equiv i+1 \bmod
6$ or $i \equiv l+1 \bmod 6$; see Fig.~\ref{fig:contradiction-3}.
W.l.o.g., assume the latter and observe that then $\gamma_2$,
$\delta_i$, $\gamma_1$, $\delta_l$ would form an induced~$C_4$, again
contradicting Lemma~\ref{lem:no_induced_c4}.

We conclude that $S_{\alpha}$ contains at most five circles.
Together with~$\gamma_1$ and~$\gamma_2$, at most seven circles are
orthogonal to~$\alpha$.
\end{proof}

\begin{figure}[tb]
  \begin{subfigure}[b]{0.28\textwidth}
    \centering
    \includegraphics{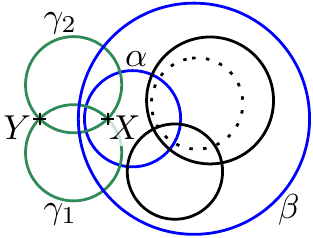}
  \end{subfigure}
  \hfill
  \begin{subfigure}[b]{0.22\textwidth}
    \centering
    \includegraphics[page = 1]{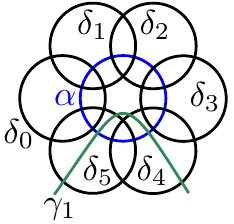}
  \end{subfigure}
  \hfill
  \begin{subfigure}[b]{0.22\textwidth}
    \centering
    \includegraphics[page = 2]{max-deg-circle-contradictions}
  \end{subfigure}
  \hfill
  \begin{subfigure}[b]{0.2\textwidth}
    \centering
    \includegraphics[page = 3]{max-deg-circle-contradictions}
  \end{subfigure}

  \begin{subfigure}[t]{0.28\textwidth}
    \centering
    \caption{the circles of $S_{\alpha}$ are in bold black}
    \label{fig:max-deg-circle}
  \end{subfigure}
  \hfill
  \begin{subfigure}[t]{0.22\textwidth}
    \centering
    \caption{$i=4$, $j=5$}
    \label{fig:contradiction-1}
  \end{subfigure}
  \hfill
  \begin{subfigure}[t]{0.22\textwidth}
    \centering
    \caption{$i=3$, $j=5$}
    \label{fig:contradiction-2}
  \end{subfigure}
  \hfill
  \begin{subfigure}[t]{0.2\textwidth}
    \centering
    \caption{$i=2$, $j=5$, \mbox{$l=1$,} $k=4$}
    \label{fig:contradiction-3}
  \end{subfigure}

  \caption{Illustrations to the proof of Lemma~\ref{lem:max-deg-circle}}
  \label{fig:max-deg-circle-contradictions}
\end{figure}

Using the lemma above and Euler's formula, we now can prove
Theorem~\ref{thm:orthogonalCircleArrangements}.

\begin{proof}[of Theorem~\ref{thm:orthogonalCircleArrangements}]
  Let \arr be an arrangement of orthogonal circles.  By
  Lemma~\ref{lem:max-deg-circle}, \arr contains a circle~$\alpha$
  orthogonal to at most seven circles.  The circle~$\alpha$ yields at
  most $14$ intersection points.  By induction, the whole arrangement
  has at most $14n$ intersection points.
 
  Consider the planarization $G'$ of~\arr, and let $n'$, $m'$, $f'$,
  and~$c'$ denote the numbers of vertices, edges, faces, and connected
  components of~$G'$,
  respectively.  Since every vertex in the planarization corresponds
  to an intersection, the resulting graph is $4$-regular and therefore
  $m' = 2n'$.  By Euler's formula, we obtain $f' = n'+ 1 + c'$.  This
  yields $f' \le 15n + 1$ since $n' \le 14n$ and $c' \le n$.
\end{proof}

\subsection{Bounding the Number of Small Faces}
\label{sec:counting-small-faces}

In the following
we study the number of faces of each type,
that is, the number of 
digonal, triangular, and quadrangular faces.
We begin with some notation.
Let~$\arr$ be an arrangement of orthogonal circles in the plane.
Let $S$ be some subset of the circles of $\arr$. 
A face in $S$ is called a region in $\arr$ formed by $S$; see for
instance Fig.~\ref{fig:region}.
Note that each face of $\arr$ is also a region.

Let $s$ be the region formed by some circular arcs $a_1,a_2,\dots,a_k$
enumerated in counterclockwise order around~$s$.  For an arc $a_i$
with $i \in \{1,\dots,k\}$, let $\alpha$ be the circle that
supports~$a_i$.  If $\ce{\alpha}=(x_{\alpha}, y_{\alpha})$ is the center of
$\alpha$ and $\ra{\alpha}$ its radius, we can write $\alpha$ as
$\big\{\ce{\alpha}+\ra{\alpha}(\cos t,\sin t) \colon
t\in[0, 2\pi]\big\}$.  Let $u$ and $v$ be the endpoints
of~$a_i$ so that we meet~$u$ first when we traverse~$s$
counterclockwise when starting outside of~$a_i$.
Let $u = \ce{\alpha} + \ra{\alpha}(\cos{t_1},\sin{t_1})$ and    
$v = \ce{\alpha} + \ra{\alpha}(\cos{t_2},\sin{t_2})$.
We say that the region $s$ \emph{subtends} an angle in the circle $\alpha$
of size $\subtend{s}{a_i} = t_2 - t_1$ with respect to the arc $a_i$.
Note that $\subtend{s}{a_i}$ is negative if $a_i$ forms a concave side of $s$.
If the circle $\alpha$ forms only one side of the region $s$, then we just
say that the region $s$ \emph{subtends} an angle in the circle $\alpha$
of size $\subtend{s}{\alpha} = t_2 - t_1$.
Moreover, if $s$ is a digonal region, that is, it is formed by only two circles 
$\alpha$ and $\beta$, then
we simply say that  $\beta$ subtends an angle of $\subtend{\beta}{\alpha}  = t_2 - t_1$ in $\alpha$ to mean
$\subtend{s}{\alpha}$.

By  \emph{total angle} we denote the sum of subtended angles by $s$ with respect
to all the arcs that form its sides, that is,  $\sum_{i=1}^k \subtend{s}{a_i}$.

\begin{figure}[tb]
  \minipage[b]{0.37\textwidth}
      \centering
      \includegraphics{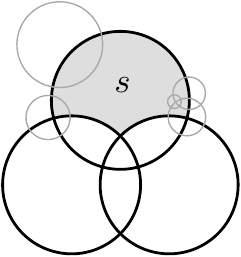}
      \caption{Region $s$ is a face in the arrangement of the bold circles}
      \label{fig:region}
  \endminipage
  \hfill
  \minipage[b]{0.53\textwidth}
      \centering
      \includegraphics{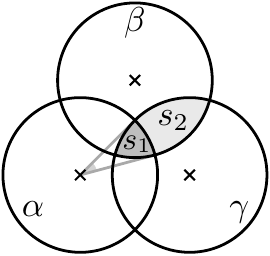}
      \caption{Angles subtended by the regions $s_1$ and $s_2$ in the
        circle $\alpha$; $\subtend{s_1}{\alpha} =
        -\subtend{s_2}{\alpha}$}
      \label{fig:subtention}
  \endminipage
\end{figure}
 
We now give an upper bound on the
number of digonal and triangular faces in an arrangement $\arr$
of $n$ orthogonal circles. 
The tool that we utilize in this section is the Gauss--Bonnet
formula~\cite{gauss-bonnet} which,
in the restricted case of orthogonal circles in the plane, states that,
for every region~$s$ formed by some circular arcs $a_1,a_2,\dots,a_k$,
it holds that
\begin{equation*}
 \sum_{i=1}^k \subtend{s}{a_i} + \frac{\D k\pi}{\D 2} = 2\pi.
\end{equation*}
This formula implies that each digonal or triangular face subtends a
total angle of size $\pi$ and of size $\pi/2$, respectively.  Thus, we
obtain the following bounds.

\begin{theorem}
\label{thm:small-faces}
 Every arrangement of $n$ orthogonal circles has at most $2n$ digonal~faces and
 at most $4n$ triangular faces.
\end{theorem}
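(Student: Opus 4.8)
The plan is to use the Gauss--Bonnet formula to assign to each small face a fixed amount of ``subtended angle'' and then to charge this angle to the circles of the arrangement, arguing that each circle can absorb only a bounded amount of charge. For the digonal faces I would first invoke the observation that each digonal face subtends a total angle of exactly $\pi$. A digon is formed by two orthogonal circles $\alpha$ and $\beta$, and its boundary consists of two arcs, one on each circle. I would like to argue that on any single circle $\alpha$, the arcs that bound digons together subtend at most $2\pi$ worth of angle, so that summing over all $n$ circles gives at most $2\pi n$ total subtended angle on the ``$\alpha$-side'' of all digons; since each digon contributes total angle $\pi$ split across its two circles, I would match this up to conclude the number of digons is at most $2n$.

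To make the counting clean, I would prove that along a fixed circle $\alpha$, the arcs belonging to distinct digons are pairwise non-overlapping: two digons sharing the circle $\alpha$ correspond to two distinct neighbors $\beta$ and $\gamma$ of $\alpha$, and I would use the orthogonality constraint (Observation~\ref{obs:basic}) to show that the digon arcs on $\alpha$ have disjoint interiors. Because the arcs are disjoint and each lies on a circle of total angular measure $2\pi$, the sum of the (positive) subtended angles from digons on $\alpha$ is at most $2\pi$. Summing over all $n$ circles and dividing by the per-digon total of $\pi$ then yields the bound of $2n$ digonal faces.

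For the triangular faces the structure is richer because a triangle has three arcs on three circles. Here I would use the Gauss--Bonnet consequence that each triangular face subtends total angle $\pi/2$. By Observation~\ref{obs:face-shape}, every triangular face is formed by Apollonian circles, so locally its three circles are highly constrained (one orthogonal to the other two), which should limit how the three arcs can sit on their supporting circles. The idea is again to charge the angle $\pi/2$ of each triangle to its three supporting circles and to bound the total angular charge any one circle $\alpha$ can receive. Since the triangle angle is distributed across three arcs, and by an argument analogous to the digon case the arcs from distinct triangular faces on a common circle have disjoint interiors (or can be made to account for at most $2\pi$ of angular measure per circle), I would get that the total subtended angle over all triangles is at most $2\pi n$, and dividing by the per-triangle total of $\pi/2$ gives at most $4n$ triangular faces.

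The main obstacle will be the disjointness/charging argument for the triangular case, since a triangle's arcs can have negative subtended angle when a side is concave, so I cannot simply add up positive contributions and cap at $2\pi$ per circle. I expect the delicate point is to handle concave sides correctly---either by showing that concave contributions are rare given the Apollonian structure, or by reorganizing the charging scheme so that each circle still absorbs at most $2\pi$ of net positive charge. I would try to resolve this by carefully using the fact (from Lemma~\ref{lem:apollonian-structure} and Lemma~\ref{lem:no4circles}) that the circles meeting any given circle orthogonally have a rigid nested/Apollonian layout, which should control the number and orientation of the arcs bounding triangles on that circle and hence keep the net angular charge bounded.
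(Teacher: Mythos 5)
Your proposal follows essentially the same route as the paper: Gauss--Bonnet gives total subtended angle $\pi$ per digonal face and $\pi/2$ per triangular face, the convex arcs that distinct faces use on any fixed circle are pairwise disjoint and hence sum to at most $2\pi$, and dividing the resulting total of $2\pi n$ by the per-face amounts yields $2n$ and $4n$. The obstacle you flag for the triangular case is not actually one: concave sides contribute \emph{negatively} to the subtended angle, so they can only pull the per-circle sum further below $2\pi$, and no appeal to the Apollonian structure of Observation~\ref{obs:face-shape} is needed.
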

\begin{proof}
Because faces do not overlap, each digonal or triangular face uses a unique convex arc 
of a circle bounding this face. 
Therefore, the sum of angles subtended by digonal or triangular faces formed by the same circle
must be at most $2\pi$.
Analogously, the sum of total angles over all digonal or triangular faces cannot exceed $2n\pi$.
By the Gauss--Bonnet formula each digonal or triangular face subtends
a total angle of size $\pi$ or $\pi/2$, respectively.
This gives an upper bound of $2n$ on the number of digonal faces and
an upper bound of $4n$ on the number of triangular faces.
\end{proof}

Theorem~\ref{thm:small-faces} 
can be generalized to all convex orthogonal closed curves since the
Gauss--Bonnet formula does not require curves to be circular. 
In contrast to this, for example, a grid made of axis parallel 
rectangles has quadratically many quadrangular faces. 
This makes circles a special subclass of convex orthogonal 
closed curves. 
We refer to the full version for more details \cite{full-version-arxiv}.

The Gauss--Bonnet formula does not help us to get an upper bound
on the number of quadrangular faces. 
Using Observation~\ref{obs:face-shape}, however, it is possible to
restrict the types of quadrangular faces to several shapes and obtain
bounds on the number of faces of each type.  Apart from being
interesting in its own right, such a bound also provides a bound on
the total number of faces in an arrangement of orthogonal circles.
Namely, since the average degree of a face in an arrangement of
orthogonal circles is~4, a bound on the number of faces of degree at
most~4 gives a bound on the number of all faces in the arrangement
(via Euler's formula).  Unfortunately, the bound on the number of
quadrangular faces that we achieved
was $17n$ and thus higher than the bound $15n+2$ that we now have for
the number of \emph{all} faces in an arrangement of $n$ orthogonal
circles.

\section{Intersection Graphs of Orthogonal Circles}
\label{sec:ocig}

Given an arrangement~$\arr$ of orthogonal circles, consider its
\emph{intersection} graph, which is the graph with vertex set~\arr
that has an edge between any pair of intersecting circles in \arr.
Lemmas~\ref{lem:no4circles} and~\ref{lem:no_induced_c4} imply
that such a graph does not contain any~$K_4$ and any induced~$C_4$.
We show that such graphs can be non-planar (Lemma~\ref{lem:non-planar}),
then we bound their edge density (Theorem~\ref{thm:linear}),
and finally we consider the intersection graphs arising from
orthogonal \emph{unit} circles (Theorem~\ref{thm:recognition}).

\begin{lemma}
\label{lem:non-planar}
  For every $n$, there is an intersection graph of orthogonal circles
  that contains~$K_n$ as a minor.  The representation uses circles of
  three different radii.
\end{lemma}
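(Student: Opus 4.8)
The plan is to build, for every $n$, an intersection graph of orthogonal circles that contains $K_n$ as a minor, using the Apollonian structure established in Lemma~\ref{lem:apollonian-structure}. The key idea is that if I take two orthogonal circles $\alpha$ and $\beta$ from an elliptic pencil (they share the two points $A$ and $B$), then the entire corresponding hyperbolic pencil consists of circles orthogonal to both $\alpha$ and $\beta$. These hyperbolic circles are pairwise \emph{disjoint} and form two nested families, one around $A$ and one around $B$. So from a single Apollonian configuration I already get many circles orthogonal to the fixed pair $(\alpha,\beta)$, but the hyperbolic circles do not intersect each other, so on their own they only give an independent set.

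First I would fix the pair $\alpha,\beta$ and select $n$ circles $h_1,\dots,h_n$ from the hyperbolic pencil (say all nested around $A$). Each $h_i$ is orthogonal to both $\alpha$ and $\beta$, but the $h_i$ are mutually disjoint. To create the adjacencies needed for a $K_n$ minor, I would introduce, for each \emph{pair} $\{h_i,h_j\}$, a small ``connector'' circle that is orthogonal to both $h_i$ and $h_j$ but to nothing else relevant; since any two disjoint circles together with a third orthogonal to both sit inside a common Apollonian family, and two nested circles always admit a further orthogonal circle crossing the annulus between them, such connectors exist. The cleaner route, and the one I expect to use, is to realize the branch sets of the minor directly: contract each $h_i$ (together with a thin ``chain'' of orthogonal circles attached to it) to a single vertex of $K_n$, and route the $\binom{n}{2}$ connecting paths through additional orthogonal circles so that distinct branch sets touch. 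This matches the claim that the construction uses circles of three different radii: one radius for the $\alpha,\beta$ (elliptic) circles, one for the nested hyperbolic circles $h_i$, and one for the connector circles realizing the edges.

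I would verify correctness by checking two things. Orthogonality: every newly added circle meets only the circles it is supposed to, at a right angle, which I confirm via Observation~\ref{obs:basic} ($\ra{\cdot}^2+\ra{\cdot}^2=|\ce{\cdot}\ce{\cdot}|^2$) after fixing explicit coordinates for the concentric-circles-and-lines preimage and inverting (recall from the preliminaries that an elliptic/hyperbolic Apollonian family is the inversion image of concentric circles centered at $X$ together with lines through $X$). Minor structure: the branch sets are connected (each is a hyperbolic circle plus its attached chain), pairwise disjoint as vertex sets, and every pair of branch sets is joined by at least one orthogonality edge via a connector, which is exactly what a $K_n$ minor requires.

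The main obstacle will be controlling \emph{unwanted} orthogonalities: when I add the connector circles to force adjacencies, I must ensure each connector is orthogonal to precisely the two hyperbolic circles it is meant to link and does not accidentally become orthogonal to a third circle (which, by Lemma~\ref{lem:no4circles}, could not happen for four mutually orthogonal circles but could still create spurious edges that, while harmless for a minor, complicate the bookkeeping). The most delicate point is that the nested hyperbolic circles cluster geometrically, so the annular gaps between consecutive $h_i$ shrink; I would address this by choosing the $h_i$ with enough spacing (only $n$ of them are needed) and placing each connector in the open annulus between the specific pair it joins, where a suitably sized orthogonal circle provably fits. Working in the pre-inversion model of concentric circles and radial lines makes both the placement and the orthogonality checks routine, so I expect the geometry to go through once coordinates are committed.
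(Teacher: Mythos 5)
Your central construction breaks down, and the obstruction is precisely Lemma~\ref{lem:apollonian-structure}. All of your circles $h_1,\dots,h_n$ lie in one hyperbolic pencil with limit points $A$ and $B$; any two of them, say $h_i$ and $h_j$, generate that same pencil, so every circle orthogonal to both $h_i$ and $h_j$ belongs to the conjugate \emph{elliptic} pencil, i.e., it passes through both $A$ and $B$. Hence a ``connector'' for the pair $\{h_i,h_j\}$ is automatically orthogonal to \emph{every} $h_k$, and, worse, it is a third circle of the elliptic pencil alongside $\alpha$ and $\beta$. Since the elliptic part of an Apollonian family can contribute at most two circles to an orthogonal arrangement (the last sentence of Lemma~\ref{lem:apollonian-structure}), the connector meets $\alpha$ or $\beta$ at $A$ at a non-right angle, so the collection is no longer an arrangement of orthogonal circles at all. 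Your claim that ``two nested circles always admit a further orthogonal circle crossing the annulus between them'' is false in the sense you need: such a circle exists, but it is forced through the limit points, one of which lies inside the inner circle and one outside the outer one, so no amount of spacing or radius-tuning confines it to the annulus or restricts its orthogonalities to the chosen pair. The danger is therefore not merely ``spurious edges that are harmless for a minor'': the construction produces non-orthogonal intersections, and even discarding $\alpha$ and $\beta$, at most two connectors can coexist, so the intersection graphs obtainable this way are essentially $K_{2,n}$ plus an edge and contain no $K_5$ minor.

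The paper's proof takes a different, combinatorial route. It defines a \emph{chain} as an arrangement of orthogonal circles whose intersection graph is a path, together with a crossing gadget in which two \emph{disjoint} circles of one chain are both orthogonal to a single circle of the other chain; this lets two chains cross geometrically while adding exactly two edges between the corresponding paths. Realizing $n$ pairwise-crossing rectilinear grid paths as chains and contracting each chain yields $K_n$; the three radii arise from this gadget. Your closing ``cleaner route'' gestures at the same branch-set idea, but without such a crossing gadget it never explains how two circle-paths can cross in the plane without creating forbidden configurations---and that mechanism is the entire content of the proof.
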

\begin{proof}
  Let a \emph{chain} be an arrangement of orthogonal circles whose
  intersection graph is a path.  We say that two chains~$C_1$
  and~$C_2$ \emph{cross} if two disjoint circles~$\alpha$
  and~$\beta$ of one chain, say~$C_1$, are orthogonal to the same
  circle~$\gamma$ of the other chain~$C_2$; see
  Fig.~\ref{fig:chain-crossing} (left).  If two chains cross, their
  paths in the intersection graph are connected by two edges; see the
  dashed edges in Fig.~\ref{fig:chain-crossing} (right).
 
  Consider an arrangement of $n$ rectilinear paths embedded on a grid
  where each pair of curves intersect exactly once; see the inset in
  Fig.~\ref{fig:k5-paths}.  We convert the arrangement of paths into
  an arrangement of chains such that each pair of chains crosses; see
  Fig.~\ref{fig:k5-paths}.  Now consider the intersection graph of the
  orthogonal circles in the arrangement of chains.  If we contract
  each path in the intersection graph that corresponds to a chain, we
  obtain~$K_n$.
\end{proof}

\begin{figure}[tb]
\centering
  \begin{subfigure}[b]{0.55\textwidth}
    \centering
    \includegraphics[page = 1]{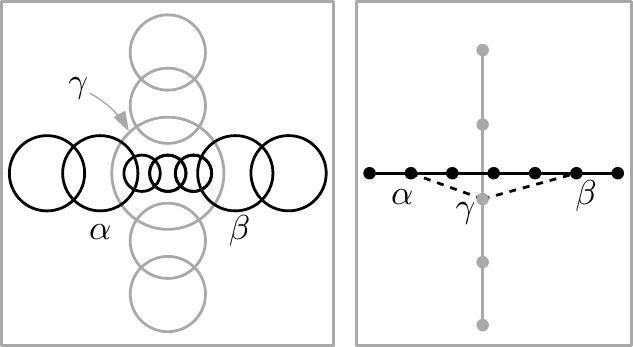}
    \caption{a chain crossing and its intersection graph}
    \label{fig:chain-crossing}
  \end{subfigure}
  \hfill
  \begin{subfigure}[b]{0.43\textwidth}
    \centering
    \includegraphics[page=2,scale=.9]{proper-chain-intersection}
    \caption{pairwise intersecting paths (see inset) and the
      corresponding chains in an orthogonal circle representation}
    \label{fig:k5-paths}
  \end{subfigure}

  \caption{Construction of an orthogonal circle intersection graph
    that contains $K_n$ as a minor (here $n=5$).}
\end{figure}
 
Next, we discuss the density of orthogonal circle intersection graphs.
Gy\'arf\'as et al.~\cite{ghs-lcc4fg-Combin02} have shown that any
$C_4$-free graph on $n$ vertices with average degree at least~$a$ has
clique number at least $a^2/(10n)$.  Due to
Lemma~\ref{lem:no4circles}, we know that orthogonal circle
intersection graphs have clique number at most~3.  Thus, their average
degree is bounded from above by~$\sqrt{30n}$, leading to at most $\sqrt{7.5}n^{\frac{3}{2}}$ edges in total.   
However, Lemma~\ref{lem:max-deg-circle} implies the following stronger bound.

\begin{theorem}
  \label{thm:linear}
  The intersection graph of a set of $n$ orthogonal circles has at
  most $7n$ edges.
\end{theorem}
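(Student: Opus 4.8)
The plan is to deduce the bound directly from Lemma~\ref{lem:max-deg-circle} by observing that the stated degree bound is \emph{hereditary}, and then to run a standard induction on the number of circles. The crucial point is that any subset of an arrangement of orthogonal circles is itself an arrangement of orthogonal circles: orthogonality is a pairwise condition, so deleting circles cannot destroy it. Consequently Lemma~\ref{lem:max-deg-circle} applies not just to the full arrangement but to every sub-arrangement obtained by repeatedly removing circles. In graph-theoretic terms, this says precisely that the intersection graph of a set of orthogonal circles is $7$-degenerate.

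Concretely, I would argue by induction on~$n$. The base case (an arrangement of at most one circle) has no edges, so the bound $7n$ holds trivially. For the inductive step, let $\arr$ be an arrangement of $n$ orthogonal circles. By Lemma~\ref{lem:max-deg-circle}, $\arr$ contains a circle~$\alpha$ that is orthogonal to at most seven other circles; equivalently, the vertex of the intersection graph corresponding to~$\alpha$ has degree at most~$7$. Deleting~$\alpha$ removes at most seven edges and yields an arrangement of $n-1$ orthogonal circles, whose intersection graph has at most $7(n-1)$ edges by the induction hypothesis. Adding~$\alpha$ back therefore gives at most $7(n-1) + 7 = 7n$ edges, as claimed.

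I expect no substantial obstacle here: the entire argument rests on the hereditary nature of the degree bound, which is immediate from the fact that each pairwise orthogonality relation survives the deletion of other circles. The only thing to state carefully is that Lemma~\ref{lem:max-deg-circle} may be invoked afresh on the reduced arrangement at each step of the induction, which is exactly what makes the recursion valid. This is simply the well-known fact that a $d$-degenerate graph on $n$ vertices has at most $dn$ edges, instantiated with $d = 7$.
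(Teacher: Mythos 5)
Your proof is correct and follows essentially the same route as the paper: the paper also deduces the bound from Lemma~\ref{lem:max-deg-circle} via the observation that \emph{every} arrangement (hence every sub-arrangement) has a circle of degree at most seven, i.e., the intersection graph is $7$-degenerate. You merely spell out the induction that the paper leaves implicit in the word ``always.''
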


\begin{proof}
  The geometric representation of an orthogonal circle intersection
  graph is an arrangement of orthogonal circles.  By
  Lemma~\ref{lem:max-deg-circle}, an arrangement of
  $n$ orthogonal circles always has a circle orthogonal to at most seven circles.
  Therefore, the corresponding intersection graph always
  has a vertex of degree at most seven.
  Thus, it has at most $7n$ edges.
\end{proof}

The remainder of this section concerns a natural subclass of
orthogonal circle intersection graphs, the orthogonal \emph{unit} 
circle intersection graphs.  Recall that these are orthogonal circle
intersection graphs with a representation that consists of unit
circles only.  As Fig.~\ref{fig:shrinking} shows, every representation
of an orthogonal unit circle intersection graph can be transformed (by
scaling each circle by a factor of $\sqrt{2}/2$) into a representation
of a \emph{penny graph}, that is, a contact graph of equal-size disks.
Hence, every orthogonal unit circle intersection graph is a penny
graph~-- whereas the converse is not true.  For example, $C_4$ or the
5-star are penny graphs but not orthogonal unit circle intersection
graphs (see Fig.~\ref{fig:penny-c4}).

Orthogonal unit circle intersection graphs being penny graphs implies
that they inherit the properties of penny graphs, e.g., their maximum
degree is at most six and their edge density is at most
$\lfloor 3n - \sqrt{12n-6} \rfloor$, where $n$ is the number of
vertices~\cite[Theorem 13.12, p. 211]{book-pach}.  
Because triangular grids are orthogonal unit circle intersection 
graphs, this upper bound is tight.  

\begin{figure}[tb]
  \centering
  
  \begin{subfigure}[t]{0.42\textwidth}
    \centering
    \includegraphics[page=1]{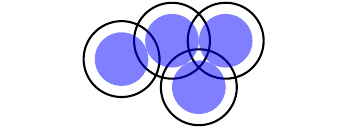}
    \caption{all orthogonal unit circle intersection graphs are penny
      graphs}
    \label{fig:shrinking}
  \end{subfigure}
  \hfil
  \begin{subfigure}[t]{0.42\textwidth}
    \centering
    \includegraphics[page=2]{shrinking}
    \caption{penny graphs that aren't orthogonal unit circle
      intersection graphs}
    \label{fig:penny-c4}
  \end{subfigure}
     
  \caption{Penny graphs vs. orthogonal unit circle intersection
    graphs}
\end{figure}

As it turns out, orthogonal unit circle intersection graphs share
another feature with penny graphs: their recognition is NP-hard.  The
hardness of penny-graph recognition can be shown using the \emph{logic
  engine}~\cite[Section 11.2]{gd-book}, which simulates an instance of
the Not-All-Equal-3-Sat (\textsc{NAE3SAT}) problem.  
We establish a similar reduction for the recognition of
orthogonal unit circle intersection graphs; the details are in
\lncsarxiv{the full version \cite{full-version-arxiv}}%
{the appendix}.

\begin{restatable}{theorem}{thmrecognition}
  \label{thm:recognition}
  It is NP-hard to recognize orthogonal unit circle intersection
  graphs.
\end{restatable}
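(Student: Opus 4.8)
The plan is to reduce from Not-All-Equal-3-Sat (\textsc{NAE3SAT}), mirroring the logic-engine reduction used by Di Battista et al.~\cite{gd-book} for penny graphs. Given a \textsc{NAE3SAT} formula $\varphi$, I would construct a graph $G_\varphi$ together with an ``intended'' orthogonal unit circle representation and then show that $G_\varphi$ is an orthogonal unit circle intersection graph if and only if $\varphi$ admits a not-all-equal satisfying assignment. The governing geometry, via Observation~\ref{obs:basic}, is that two unit circles are orthogonal exactly when their centers lie at distance $\sqrt2$, that two unit circles can never be nested, and that in a valid arrangement every non-orthogonal pair must be \emph{disjoint}. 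Since tangency is excluded for orthogonal circles, this means every pair of non-adjacent vertices must be realized by circles whose centers are at distance strictly greater than~$2$. It is precisely this forbidden annulus of center distances in $(\sqrt2, 2]$---present here but absent for pennies, where non-adjacent disks may sit at any distance exceeding the contact distance---that makes the class a proper subclass of penny graphs and that each gadget must respect.

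I would assemble $G_\varphi$ from the standard logic-engine pieces, reimplemented with orthogonal unit circles. First, a \emph{rigid frame}: since triangular grids are orthogonal unit circle intersection graphs (as noted before Theorem~\ref{thm:recognition}), a large triangulated patch of the unit triangular lattice serves as a backbone whose realization is forced up to rigid motion and reflection; here the forbidden annulus is an asset, since the stronger separation leaves even fewer degrees of freedom than in the penny setting. Second, \emph{variable arms}: each variable of $\varphi$ is encoded by an arm joined to the frame through a short hinge gadget admitting exactly two rigid placements, a ``left'' and a ``right'' state, corresponding to the truth value. Third, \emph{flags and slots}: along each arm I place small subgraphs (flags) at positions indexed by the clauses, and I cut matching notches into the frame, so that the flags of all arms can be packed without any two non-adjacent circles coming closer than distance~$2$ exactly when, in each clause, the three literals do not all point the same way---that is, exactly for not-all-equal assignments.

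For the backward direction I would arrange that $G_\varphi$ is itself a valid penny logic engine, so that it is penny-realizable if and only if $\varphi$ is not-all-equal satisfiable. Then any orthogonal unit circle representation of $G_\varphi$ is in particular a penny representation---shrink each circle to radius $\sqrt2/2$ about its center, as discussed before Theorem~\ref{thm:recognition}---whence $\varphi$ is satisfiable by the known penny analysis; the stronger separation only helps here. The forward direction is the substantive one: from a satisfying assignment I must actually build an orthogonal unit circle representation, placing the frame on the triangular lattice, fixing each arm to its state, and slotting in the flags, while verifying the quantitative condition that every intended non-adjacent pair ends up at center distance greater than~$2$ (not merely greater than the contact distance). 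This may force me to dilate the construction, for example by spreading arms and notches farther apart on the lattice, so that enough slack is available.

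The hard part will be designing the gadgets---especially the hinge and the flag--slot interface---so that they meet two requirements at once: they must behave as a penny logic engine, forcing \emph{every} realization into a clean per-arm left/right choice, and they must simultaneously admit an orthogonal unit circle realization respecting the forbidden annulus whenever $\varphi$ is satisfiable. The tighter separation cuts both ways: it reinforces rigidity and can outright forbid the partially rotated or ``diagonal'' hinge placements that a penny realization might tolerate, which simplifies ruling out unintended realizations, but it also makes the intended realization harder to achieve, since every non-adjacent pair must be pushed past distance~$2$. I expect the decisive step to be a case analysis of the hinge showing that exactly two angular states keep all non-adjacent circles strictly beyond distance~$2$, while every intermediate angle drives some non-adjacent pair into the interval $(\sqrt2, 2]$ and is therefore geometrically infeasible.
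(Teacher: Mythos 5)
Your strategy is the same as the paper's: a logic-engine reduction from \textsc{NAE3SAT} in the style of Di Battista et al.~\cite{gd-book}, with the frame, variable armatures, and clause-indexed flags re-implemented so that adjacent circles sit at center distance exactly $\sqrt2$ and non-adjacent ones at distance strictly greater than $2$. You also correctly identify the governing constraint (the forbidden range of center distances between $\sqrt2$ and $2$) and the role of the shrinking argument relating the class to penny graphs. However, what you submit is a plan, not a proof: you explicitly defer ``designing the gadgets,'' and that design is essentially the entire content of the argument. The paper supplies it concretely: the frame and armatures are assembled from hexagonal blocks whose realization is forced because $K_3$ has a unique orthogonal unit circle representation; the chains are sequences of links suspended between an armature end and the shaft, which makes them taut; the shaft is forced to be taut by a matching horizontal row built into the bottom of the frame; and the flags are obtained by adding three vertices to a link, yielding a uniquely drawable flagged link. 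With these rigidity facts in hand the paper invokes the abstract logic-engine theorem of~\cite{gd-book} directly. None of these verifications appear in your proposal, so as written it does not establish the theorem.

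One further caution about your backward direction: you propose to arrange that $G_\varphi$ \emph{is itself} a valid penny logic engine, so that any orthogonal unit circle representation shrinks to a penny representation and the known penny analysis applies. That is an elegant shortcut, but it imposes a nontrivial extra design constraint --- the same graph must simultaneously be a rigid penny logic engine and be orthogonal-unit-circle realizable whenever $\varphi$ is a yes-instance. Since the two classes differ (e.g., $C_4$ and the $5$-star are penny graphs but not orthogonal unit circle intersection graphs), the penny gadgets of~\cite{gd-book} do not carry over verbatim, and a graph engineered for one class need not behave as a logic engine in the other. The paper avoids this issue by re-proving rigidity of its own gadgets directly in the orthogonal unit circle model rather than borrowing it from the penny analysis; if you keep your shortcut, you must still verify that your redesigned gadgets force the left/right armature choice as a penny graph, which is roughly the same amount of work.
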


\section{Discussions and Open Problems}

In Section~\ref{sec:complexity} we have provided upper bounds for the
number of faces of an orthogonal circle arrangement. 
As for lower bounds on the number of faces, we found only very simple
arrangements containing
$1.5n$ digonal, $2n$ triangular, and $4(n-3)$ quadrangular faces; 
see Figs.~\ref{fig:lower-bound-digons}, \ref{fig:lower-bound-triangles}, 
and~\ref{fig:lower-bound-quadrangles}, respectively.
Can we construct better lower bound examples or improve the upper bounds?

\begin{figure}[tb]
  \begin{subfigure}[b]{0.26\textwidth}
      \centering
      \includegraphics{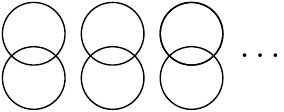}
      \caption{$1.5n$ digonal faces}
      \label{fig:lower-bound-digons}
  \end{subfigure}
  \hfill
  \begin{subfigure}[b]{0.26\textwidth}
      \centering
      \includegraphics{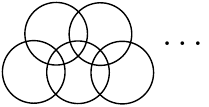}
      \caption{$2n$ triangular faces}
      \label{fig:lower-bound-triangles}
  \end{subfigure}
  \hfill
  \begin{subfigure}[b]{0.37\textwidth}
      \centering
      \includegraphics{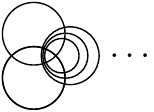}
      \caption{$4(n-3)$ quadrangular faces}
      \label{fig:lower-bound-quadrangles}
  \end{subfigure}
  \caption{Arrangements of $n$ orthogonal circles with many digonal,
    triangular, and quadrangular faces.}
\end{figure}

Recognizing (unit) disk intersection graphs is
$\exists\mathbb{R}$-complete~\cite{kang-ross-sphere-representations}.
But what is the complexity of recognizing (general) orthogonal circle
intersection graphs?

\paragraph{Acknowledgments.} 
We thank Alon Efrat for useful discussions and an anonymous reviewer
for pointing us to the Gauss-Bonnet formula.

\bibliographystyle{splncs04} %
\bibliography{abbrv,refs}

\lncsarxiv{\end{document}}{}

\newpage
\appendix

\renewcommand{\floatpagefraction}{.9}%
\renewcommand{\bottomfraction}{.9}%
\renewcommand{\topfraction}{.9}%

\section*{Appendix: \\
Recognizing Orthogonal Unit Circle Intersection Graphs}

In this section, we show how to realize the \emph{logic engine} with
orthogonal unit circle intersection graphs. The logic engine simulates
the Not-All-Equal-3-Sat (\textsc{NAE3SAT}) problem where a set $C$ of
clauses each containing three literals from a set of boolean variables
$U$ is given and the question is to find a truth assignment to the
variables so that each clause contains at least one true literal and
at least one false literal.

\begin{figure}[hb]
    \centering
    \includegraphics[page=1]{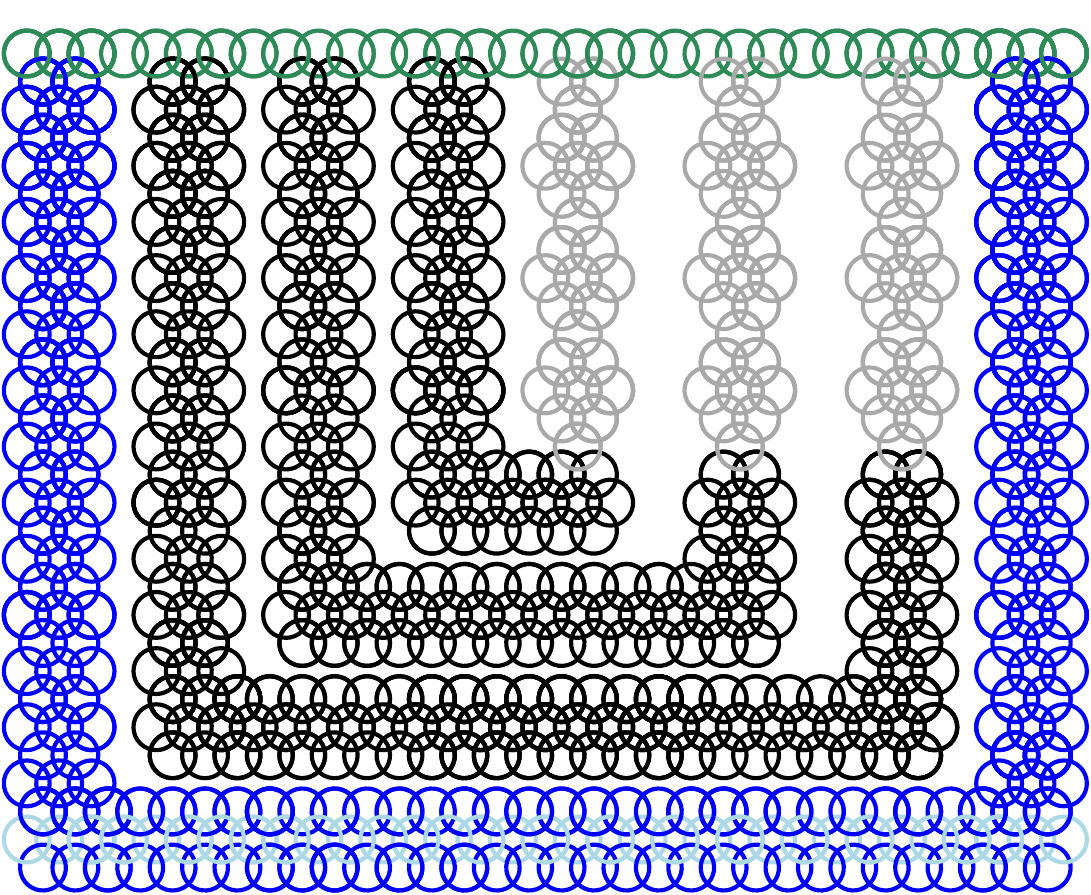}
    \caption{Orthogonal unit circle representation of the universal part of the logic engine; only half of the drawing is present, the other half is symmetric}
    \label{fig:universal}
\end{figure}

\thmrecognition*

\begin{proof}
  We closely follow the description from~\cite[Section 11.2]{gd-book}
  and use their notations and definitions.  The logic engine consists
  of the following parts (we will mostly refer to
  Figs.~\ref{fig:universal} and~\ref{fig:customized} to explain how
  the parts of the logic engine are connected).  The \emph{frame} and
  \emph{armatures} (drawn blue and black respectively in
  Fig.~\ref{fig:universal}, only half of the drawing is illustrated,
  the other half is symmetric with respect to the shaft of the logic engine,
  which is defined below)
  for the logic graph are built of hexagonal
  blocks, as shown in Fig.~\ref{fig:hexagonal-blocks-graph} whose
  orthogonal unit circle intersection representation is shown in
  Fig.~\ref{fig:hexagonal-blocks}. It is easy to see that they are
  uniquely drawable (up to rotation, reflection, and translation)
  since $K_3$ has a unique orthogonal unit circle intersection
  representation.  Each armature corresponds to a variable in $U$.

  A \emph{chain} graph (represented by gray circles in
  Fig.~\ref{fig:universal}) is a sequence of \emph{links}, as shown in
  Fig.~\ref{fig:chain-graph} whose orthogonal unit circle intersection
  representation is shown in Fig.~\ref{fig:chain}.  The number of
  links in a chain corresponds to the number of clauses in $C$.  The
  \emph{shaft} (green in Fig.~\ref{fig:universal}) is a simple path
  and serves as an axle for the armatures, that is, the armatures can
  be flipped around the shaft.  Each armature corresponding to a
  variable $x_j$ has two chains $a_j$ and $\bar a_j$ each suspended
  between one of the ends of the armature and the shaft.  For that
  reason in an orthogonal unit circle intersection representation each
  chain is taut.

  \begin{figure}[tb]
    \begin{subfigure}[t]{0.17\textwidth}
      \centering
      \includegraphics[page=2]{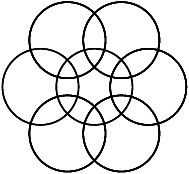}
      \caption{hexagonal block}
      \label{fig:hexagonal-blocks-graph}
    \end{subfigure}
    \hfill
    \begin{subfigure}[t]{0.27\textwidth}
      \centering
      \includegraphics[page=1]{hexagonal-blocks}
      \caption{its representation}
      \label{fig:hexagonal-blocks}
    \end{subfigure}
    \hfill
    \begin{subfigure}[t]{0.19\textwidth}
      \centering
      \includegraphics[page=2]{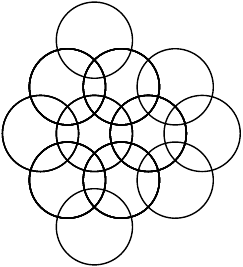}
      \caption{flagged link graph}
      \label{fig:flagged-link-graph}
    \end{subfigure}
    \hfill
    \begin{subfigure}[t]{0.25\textwidth}
      \centering
      \includegraphics[page=1]{flagged-link}
      \caption{its representation}
      \label{fig:flagged-link}
    \end{subfigure}

    \begin{subfigure}[t]{0.45\textwidth}
      \hspace*{-3ex}
      \includegraphics[page=2,angle=90]{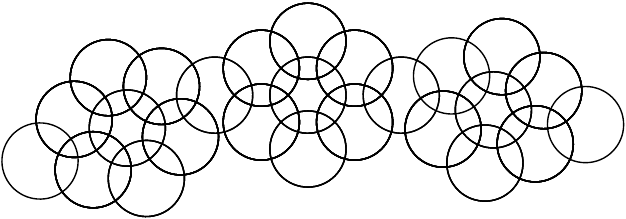}
      \caption{chain graph}
      \label{fig:chain-graph}
    \end{subfigure}
    \hfill
    \begin{subfigure}[t]{0.53\textwidth}
      \centering
      \includegraphics[page=1,angle=90]{chain}
      \caption{its representation}
      \label{fig:chain}
    \end{subfigure}

    \caption{Gadgets for the logic engine}
  \end{figure}

  So far we have described the \emph{universal} part of the logic
  engine, that is, the part that only depends on the number of clauses
  in $C$ and the number of variables in $U$; it is illustrated in
  Fig.~\ref{fig:universal}.  The frame, armatures, and chain graphs
  have a unique orthogonal unit circle intersection representation up
  to flipping armatures (see Fig.~\ref{fig:universal}), since they are
  built up of hexagonal blocks which are uniquely drawable.  We still
  need to show that the shaft is taut. This is enforced by the bottom
  part of the frame.  Consider the middle horizontal sequence of
  circles in the bottom part of the frame that spans the frame from
  the left side to the right; in light blue in
  Fig.~\ref{fig:universal}.  It is easy to see that the shaft must be
  drawn as this sequence, because it consists of the same number of
  circles and must also span the frame from the left side to the
  right.  Since the sequence is taut, the shaft is also taut. Notice
  that there is still the freedom of flipping each armature together
  with its chains around the shaft, that is, it can take two possible
  positions where one part of the armature is either above or below
  the shaft.  This is the flexibility that allows our logic engine to
  encode a solution of a \textsc{NAE3SAT} instance.

\begin{figure}[tb]
  \centering
  \includegraphics[page=2]{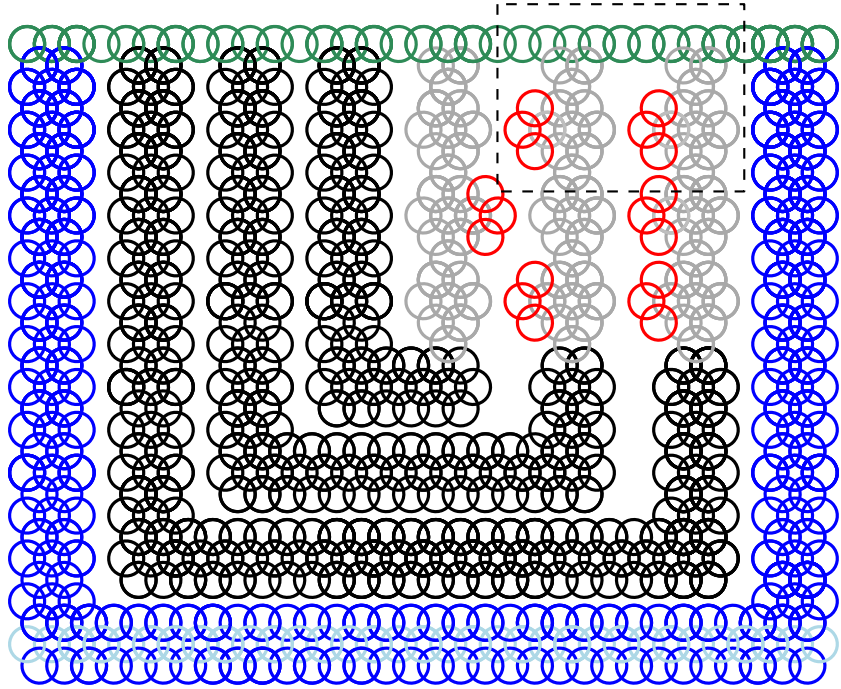}
  \caption{Orthogonal unit circle representation of a customized logic
    engine; only half of the drawing is present. The neighboring flagged links demarcated
    by the dashed rectangle collide if and only if they are flipped so that
    they point towards each other; see Fig.~\ref{fig:collision}.}
  \label{fig:customized}
\end{figure}

  Now let us show how to customize the logic engine according to an
  instance of \textsc{NAE3SAT}.  A chain link graph can be extended to
  a \emph{flagged link} by the addition of three new vertices as shown
  in Fig.~\ref{fig:flagged-link-graph} whose orthogonal unit circle
  representation is shown in Fig.~\ref{fig:flagged-link}. Note that it
  also has a unique drawing.  To simulate the given \textsc{NAE3SAT}
  instance we replace link graphs with flagged link graphs according
  to the incidence between literals and clauses.  If the literal
  $x_j \in U$ appears in clause $c_i \in C$, then link $i$ of chain
  $a_j$ is unflagged.  If the literal $\bar x_j \in U$ appears in
  clause $c_i \in C$, then link $i$ of of chain $\bar a_j$ is
  unflagged.  For an example see Fig.~\ref{fig:customized}.

  It is easy to see that by adjusting the sizes of the frame and the
  armatures we can ensure that in an orthogonal unit circle
  intersection representation of the logic engine two flagged links
  which lie in the same row and are attached to chains of adjacent
  armatures \emph{collide} if and only if they are flipped so that
  they point towards each other; see Fig.~\ref{fig:collision}.
  Similarly we can ensure
  that any flag attached to the chain of the outermost armature
  collides with the frame if it points toward the front edge of the
  frame, and any flag attached to the chain of the innermost armature
  collides with that armature if it points toward the rear.
  Therefore, we can use~\cite[Theorem~11.2]{gd-book} to show that
  the corresponding customized logic engine has an orthogonal unit
  circle representation if and only if the corresponding instance of
  \textsc{NAE3SAT} is a yes-instance.
\end{proof}

\begin{figure}[tb]
  \centering
  \includegraphics{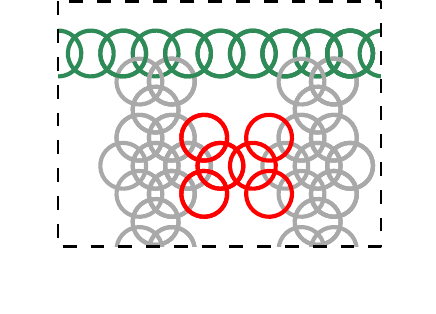}
  \caption{The neighboring flagged links
    collide if and only if they are flipped so that
    they point towards each other.}
  \label{fig:collision}
\end{figure}

\end{document}